\documentclass[runningheads]{llncs}

\usepackage[utf8]{inputenc}
\usepackage{amsmath,amssymb}
\usepackage{graphicx}
\usepackage{booktabs}
\usepackage{hyperref}
\usepackage{enumitem}
\usepackage{geometry}
\usepackage{subcaption}
\usepackage{adjustbox}

\begin{document}

\title{ Dual-Channel Closed Loop Supply Chain Competition: A Stackelberg--Nash Approach}
%
%
\author{ Gurkirat Wadhwa }
\authorrunning{Gurkirat et al.}
%
\institute{IEOR, IIT Bombay\\ 
\email{gurkirat@iitb.ac.in}}

\maketitle              

\title{\bfseries Dual-Channel Closed Loop Supply Chain Competition: A Stackelberg--Nash Approach}
\author{Gurkirat Wadhwa\\
Industrial Engineering and Operations Research (IEOR)\\
Indian Institute of Technology Bombay\\
\texttt{gurkirat@iitb.ac.in}}
\date{}

\begin{abstract}
In many consumer electronics and appliance markets, manufacturers sell products through
competing retailers while simultaneously relying on take-back programs to recover used items for re
manufacturing. Designing such programs is challenging when firms compete on prices and consumers
differ in their willingness to return products. Motivated by these settings, this paper develops a game
theoretic framework to analyze pricing and take-back decisions in a dual-channel closed-loop supply
chain (CLSC) with two competing manufacturers and two competing retailers. Manufacturers act as
Stackelberg leaders, simultaneously determining wholesale prices and consumer take-back bonuses,
while retailers engage in Nash competition over retail prices. The model integrates three key elements:
(i) segmented linear demand with cross-price effects, (ii) deterministic product returns, and (iii) an
inertia–responsiveness allocation mechanism governing the distribution of returned products between
manufacturers. Closed-form Nash equilibria are derived for the retailer subgame, along with symmetric Stackelberg equilibria for manufacturers. We derive a feasibility threshold for take-back incentives,
identifying conditions under which firms optimally offer positive bonuses to consumers. The results
further demonstrate that higher remanufacturing value or return rates lead the manufacturers to lower
wholesale prices in order to expand sales and capture additional return volumes, while high consumer
inertia weakens incentives for active collection. Numerical experiments illustrate and reinforce the analytical results, highlighting how consumer behavior, market structure and product substituitability
influence prices, bonuses, and return volumes. Overall, the study provides managerial insights for de
signing effective take-back programs and coordinating pricing decisions in competitive circular supply
chains.
\end{abstract}

\textbf{Keywords:} Closed-loop supply chains; Stackelberg–Nash competition; Take-back incentives; Remanufacturing

\section{Introduction}
Growing environmental concerns and the rapid expansion of circular economy regulations are reshaping how firms design and manage supply chains (SCs). Beyond regulatory compliance, the collection and remanufacturing of used products has emerged as a strategic lever that can influence costs, prices, and competitive positioning. Prior research has shown that remanufacturing and product recovery can fundamentally alter pricing incentives and market outcomes in SCs \cite{Fleischmann2001,GuideVanWassenhove2009,Atasu2008}. However, when multiple firms simultaneously compete in both the forward product market and the reverse collection market, strategic tensions arise that complicate pricing, incentive provision, and profitability.

In a typical dual-channel closed-loop supply chain (CLSC), manufacturers sell new products to downstream retailers while also competing for the return of used products for remanufacturing or recycling. Forward-market interactions are often hierarchical, with manufacturers acting as leaders through wholesale pricing and retailers responding via retail prices, as extensively studied in the Stackelberg  literature \cite{JeulandShugan1983,Moorthy1988,CachonLariviere2005}. The reverse channel introduces an additional strategic layer: manufacturers must design consumer-facing take-back incentives to attract returns, thereby linking forward sales decisions to reverse collection outcomes. This coupling between forward and reverse markets implies that pricing and incentive decisions cannot be analyzed in isolation.

Despite substantial progress in CLSC modeling, most analytical frameworks adopt either centralized or cooperative decision-making assumptions, or focus on stochastic return processes that obscure the underlying strategic mechanisms \cite{Savaskan2003,Debo2005}. Game-theoretic studies of closed-loop systems have increasingly examined decentralized competition and environmental considerations \cite{Taleizadeh2020,Giri2021}, yet relatively few models jointly capture forward Stackelberg leadership and reverse-market competition with explicit consumer response to take-back incentives. As a result, important questions regarding incentive provision, equilibrium pricing, and collection efficiency under decentralized competition remain unanswered.

Another limitation of the existing analytical literature  is modeling of consumer return behavior. Empirical and behavioral studies suggest that consumer returns are heterogeneous: some consumers return products regardless of incentives due to habit, convenience, or institutional norms, while others are responsive to monetary bonuses \cite{Mukhopadhyay2018,Shulman2011}. Standard proportional allocation mechanisms commonly used in CLSC models don't capture this heterogeneity and frequently generate degenerate equilibria in which firms optimally set zero take-back incentives \cite{Debo2005}. Incorporating behavioral structure into return allocation is therefore essential for obtaining economically meaningful equilibrium predictions.

Motivated by these gaps, this paper develops a deterministic and analytically tractable Stackelberg--Nash model of a dual-channel CLSC with competing manufacturers and retailers. Manufacturers act as Stackelberg leaders, simultaneously setting wholesale prices and consumer take-back bonuses, while retailers engage in Nash competition over retail prices. Product returns are modeled deterministically and allocated across manufacturers using an inertia--responsiveness rule that captures both baseline returns and incentive-sensitive behavior. This structure helps us to derive closed-form subgame-perfect equilibria, an explicit feasibility threshold for positive take-back incentives, and comparative statics linking remanufacturing value and consumer responsiveness to equilibrium outcomes.

By combining pricing, decentralized competition, and behavioral return modeling within a unified analytical framework, the proposed model allows us to understand realistic market behaviour . The analytical results offer insights into when market-based take-back programs are self-sustaining and how profitability in the reverse channel feeds back into pricing decisions in the forward market. We also present a numerical study that illustrates and validates the closed-form equilibria. Using a symmetric baseline parameterization, the numerical analysis confirms that take-back incentives collapse to zero under proportional allocation, while a strictly positive bonus emerges under the proposed inertia--responsiveness rule. Additional numerical experiments highlight the threshold role of consumer responsiveness and inertia in incentive provision and demonstrate the monotonic impact of remanufacturing value on wholesale prices.

The remainder of the paper is organized as follows. Section~\ref{sec:model} describes the model and underlying assumptions. Section~\ref{sec:equil} derives the equilibrium outcomes and analytical results. Section~\ref{sec:numerical} presents the numerical illustrations and managerial insights, followed by concluding remarks and technical proofs in the appendices.

\subsection*{Literature Survey}
\label{sec:lit}
Research on vertical relations in supply chains (SCs) has long examined manufacturer–retailer interactions through Stackelberg leadership frameworks, with foundational studies analyzing wholesale and retail pricing, channel power, and coordination under linear demand systems \cite{JeulandShugan1983,Moorthy1988,CachonLariviere2005,Wadhwa2024ICORES}. These models are widely adopted due to their analytical tractability and well-behaved equilibria. The current work follows this tradition by adopting a manufacturer-led Stackelberg structure, while extending it to a circular supply chain setting in which upstream firms also compete strategically in the reverse market.

Closed-loop supply chain (CLSC) research originated in quantitative analyses of reverse logistics and product recovery. Early surveys and classification frameworks provided the foundations for analytical modeling of remanufacturing and returns management \cite{Fleischmann2001,GuideVanWassenhove2009}. Subsequent studies incorporated pricing and channel-design considerations, demonstrating how remanufacturing affects demand, profitability, and strategic interactions \cite{Savaskan2003,Atasu2008}. These models, however, often assume either a single manufacturer or centralized decision-making, thereby abstracting away from competition in the reverse channel. Authors in  \cite{LiDebo2009} analyze a CLSC with competing retailers, highlighting how downstream competition affects reverse-channel incentives. Building on this line of work, the present study explicitly models manufacturer-level competition in both forward and reverse markets, generating endogenous coupling between wholesale pricing and take-back decisions.

More recent contributions apply noncooperative game theory to CLSCs, employing Stackelberg and Nash equilibria to study remanufacturing competition, environmental investment, and regulatory intervention. Comprehensive reviews document the growing importance of game-theoretic approaches in this domain \cite{Taleizadeh2020,Giri2021}. Extensions examine sustainability investments \cite{Tang2016}, dual-channel pricing under emissions considerations \cite{Wang2018}, and cooperative versus competitive recycling under government policies \cite{Xie2022}. Despite their richness, many of these studies rely on stochastic demand, numerical equilibrium computation, or simulation-based methods. In contrast, the present study adopts a deterministic framework with linear demand with operating and non-operating decisions \cite{Wadhwa2024ICORES} and derives closed-form pricing equilibria in both forward and reverse markets, capturing strategic interactions realistically.

Another limitation of much of the analytical CLSC literature concerns the modeling of consumer return behavior. Empirical and behavioral studies show that return decisions are heterogeneous: some consumers return products regardless of incentives due to habit or institutional norms, while others are responsive to monetary take-back bonuses \cite{Mukhopadhyay2018,Shulman2011}. Standard proportional allocation rules commonly used in analytical models \cite{Debo2005} ignore this heterogeneity and frequently generate degenerate equilibria in which optimal bonuses collapse to zero. To address this issue, the present study incorporates an inertia–responsiveness mechanism that explicitly captures both baseline returns and incentive-sensitive behavior, ensuring interior equilibria and economically consistent comparative statics.

The present study builds on three interrelated streams of literature: vertical channel competition under hierarchical decision structures, analytical models of CLSC, and behavioral approaches to capture consumer product returns. We now proceed with the detailed problem description. 
\section{Model Description}
\label{sec:model}

We consider a deterministic dual-channel closed-loop supply chain (CLSC) consisting of two manufacturers and two retailers. 
The overall structure of the system, including the direction of material flows, information flows, and the timing of strategic decisions, is illustrated in Figure~\ref{fig:model}.

\begin{figure}[h]
\centering
\includegraphics[width=0.9\linewidth]{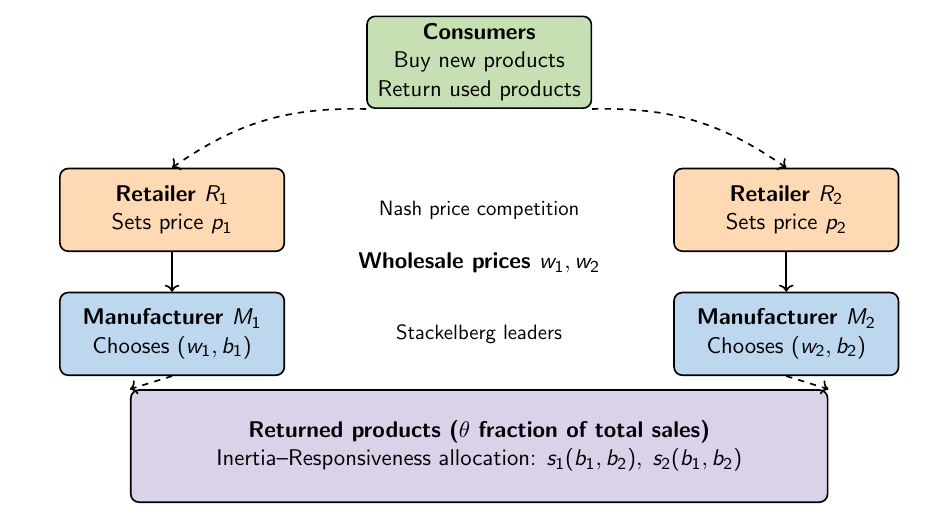}
\caption{Dual-channel CLSC Model}
\label{fig:model}
\end{figure}

As shown in Figure~\ref{fig:model}, manufacturers \(M_1\) and \(M_2\) operate upstream and supply new products to retailers \(R_1\) and \(R_2\), respectively. 
Retailers compete downstream by setting retail prices and selling the products to a common consumer market. 
Consumers purchase new products from retailers and subsequently generate used products that may be returned for recovery, giving rise to a reverse channel alongside the forward sales channel.

The SC agents' interaction is modeled as a three-stage Stackelberg--Nash game. 
In the first stage, manufacturers act as Stackelberg leaders and simultaneously choose wholesale prices and consumer take-back bonuses. 
In the second stage, retailers observe these decisions and engage in Nash competition by setting retail prices. 
In the third stage, a deterministic fraction of sold products is returned by consumers and allocated between manufacturers based on the take-back incentives offered, after which payoffs are realized.

Figure~\ref{fig:model} illustrates the coupling between the forward and reverse channels. 
Wholesale and retail pricing decisions determine sales volumes, which in turn govern the total quantity of returned products. 
At the same time, take-back incentives influence how these returns are distributed between manufacturers, directly affecting their remanufacturing profits. 
This feedback structure creates strategic interdependence among pricing and return decisions, which is central to the analysis. We now explain each component of the model one after the another.

\subsection{Players, timing, and decisions}

The interaction is modeled as a three-stage Stackelberg--Nash game, as illustrated in Figure~\ref{fig:model}. 
Manufacturers $M_1$ and $M_2$ act as Stackelberg leaders supplying products to retailers $R_1$ and $R_2$ respectively who act as followers. 
At each level of the hierarchy, firms interact non-cooperatively in a Nash framework.

\begin{itemize}
    \item[i)]\textit{Stage 1 (Manufacturers -- leaders):}
        Each manufacturer $M_i$, $i\in \{1,2\}$  simultaneously chooses wholesale prices $w_i$ and consumer take-back bonuses $b_i$. 
    At this stage, manufacturers anticipate retailers’ pricing responses and the resulting return allocations.
    The wholesale price $w_i$ is charged to retailer $R_i$ for each unit sold, while the bonus $b_i$ is paid directly to consumers for each returned unit collected by  $M_i$.

    \item[ii)] \textit{Stage 2 (Retailers -- followers):}
    After observing $(w_1,w_2,b_1,b_2)$, retailers $R_1$ and $R_2$ simultaneously choose retail prices $p_1$ and $p_2$.
    Retailers compete in prices in a Nash equilibrium, taking manufacturers’ decisions as given.

    \item[iii)] \textit{Stage 3 (Returns and payoffs):}
    Consumer demand is realized, a deterministic fraction of sold products is returned and allocated between manufacturers according to the take-back incentives, and payoffs for all firms are realized.
\end{itemize}

This timing reflects the hierarchical nature of the SC; upstream manufacturers commit to wholesale prices and take-back programs first, while downstream retailers subsequently adjust prices competitively in response.

\subsection{Demand structure}

Consumers perceive the two retailers as offering differentiated but substitutable products. 
Following standard practice in competitive pricing models, the demand faced by retailer $i$ is given by (see \cite{Wadhwa2024ICORES})
\begin{equation}
D_i(p_i,p_j)
=
\bigl(\bar d_i - \alpha_i p_i + \varepsilon \alpha_j p_j\bigr)^+,
\qquad i \neq j,
\label{eq:demand}
\end{equation}
where $\bar d_i > 0$ denotes the potential market size of retailer $i$, $\alpha_i > 0 $ captures consumers’ sensitivity to retailer $i$’s quoted price, and $\varepsilon \in [0,1)$ measures the degree of substituitability between the two retailers’ products.
The operator $(\cdot)^+$ ensures non-negativity of demand. 
We focus on interior equilibria, so this constraint is inactive in the subsequent analysis.

\textit{Economic interpretation:}
The term $(\bar d_i - \alpha_i p_i)$ represents the standard downward-sloping demand relationship, where demand decreases with the retailer’s quoted price and is sensitized by $\alpha_i$. A larger value of $\alpha_i$ corresponds to more price-sensitive consumers and more aggressive price competition, whereas a smaller value implies relatively inelastic demand. 
The cross-price term $\varepsilon \alpha_j p_j$ captures demand diversion arising from competitive interactions; an increase in the rival retailer’s price shifts a portion of consumers toward retailer $i$.
The parameter $\varepsilon$ can be interpreted as a measure of consumer switching propensity. 
When $\varepsilon$ is close to \textit{zero}, products are weak substitutes and consumers exhibit strong loyalty to a given retailer; demand is therefore primarily driven by each retailer’s own pricing decision (see \cite{Wadhwa2024ICORES}) . 
As $\varepsilon$ increases toward \textit{one}, products become increasingly substitutable and interchangeable, and consumers are more willing to switch between retailers in response to relative price differences. 
This captures markets in which consumers primarily seek product availability rather than retailer-specific attributes.
This linear demand model with cross-price effects is widely adopted in differentiated duopoly and SC competition models because it offers analytical tractability while preserving economically meaningful competitive interactions (e.g., \cite{SinghVives1984,Wadhwa2024ICORES}).

\subsection{Return generation and allocation mechanism}

A key feature of the model is the treatment of product returns in the reverse channel. 
Consistent with empirical and modeling evidence in CLSCs (e.g., \cite{Fleischmann2001,Atasu2013,GuideVanWassenhove2009}), we assume that a deterministic fraction $\theta \in (0,1)$ of total sales is returned by consumers after use. 
Accordingly, the total quantity of returned products is given by
\begin{equation}
Q_{\text{tot}} = \theta (D_1 + D_2).
\label{eq:totalreturns}
\end{equation}

Returned products are then allocated between manufacturers based on the take-back incentives they offer. 
Following the behavioral literature on consumer return decisions, we adopt an inertia--responsiveness allocation rule:
\begin{equation}
s_i(b_1,b_2)
=
\frac{\beta_i b_i + \gamma_r}
{\beta_1 b_1 + \beta_2 b_2 + 2\gamma_r},
\qquad i=\{1,2\},
\label{eq:allocation}
\end{equation}
where $\beta_i \ge 0$ measures the responsiveness of consumers to manufacturer $i$’s bonus, and $\gamma_r > 0$ captures baseline inertial returns that are independent of monetary incentives. The inertia–responsiveness allocation rule in \eqref{eq:allocation} abstracts from channel-specific frictions such as the retailer from which the product was purchased, differences in return convenience across channels, or retailer-managed collection programs. Instead, returns are modeled as being directed solely by manufacturer-level take-back incentives and baseline consumer inertia. This reduced-form approach allows us to isolate strategic competition among manufacturers in the reverse channel and is analytically tractable. Incorporating retailer-dependent return frictions or channel-specific collection costs would be an interesting future direction.

Under this rule, manufacturer $i$ collects
\begin{equation}
Q_{r,i} = \theta (D_1 + D_2) s_i(b_1,b_2)
\end{equation}
units for remanufacturing.
The allocation mechanism reflects heterogeneity in consumer return behavior. 
A segment of consumers returns used products regardless of incentives due to habit, convenience, or institutional arrangements such as default collection channels; this behavior is captured by the inertia parameter $\gamma_r$. 
Another segment responds to monetary take-back incentives, with sensitivity measured by $\beta_i$.

When $\gamma_r = 0$, the allocation rule reduces to the proportional form
$s_i = b_i / (b_1 + b_2)$, which has been widely adopted in competitive take-back models (e.g., \cite{Savaskan2003}). 
However, in noncooperative settings this proportional formulation often leads to corner solutions in which all firms optimally choose zero bonuses, resulting in degenerate equilibria.

Introducing baseline inertia ensures that returns occur even when incentives are weak, guarantees interior allocation shares, and preserves analytical tractability and allows us to derive economically meaningful equilibria while remaining consistent with observed consumer return behavior documented in empirical studies (e.g., \cite{Mukhopadhyay2018,Shulman2011}).

\subsection{Costs, profits, and operating decisions}

Each retailer $R_i$ and each manufacturer $M_i$ incurs a fixed operating cost, denoted by $O_{R_i}$ and $O_{M_i}$, respectively. 
 Firms participate in the market only if their equilibrium profit is nonnegative (see \cite{Wadhwa2024ICORES}) ; otherwise, they optimally choose not to operate and earn zero payoff.

If retailer $R_i$ operates, its profit is given by
\begin{equation}
\pi_{R_i}(p_i,p_j;w_i)
=
(p_i - w_i) D_i(p_i,p_j) - O_{R_i},
\label{eq:retprofit}
\end{equation}
where the first term represents the retail margin earned on each unit sold and the second term captures the fixed cost of operation.

If manufacturer $M_i$ operates, its profit is given by
\begin{equation}
\pi_{M_i}
=
(w_i - c_i) D_i^*
+
(v_i - b_i - k_i) Q_{r,i}
-
O_{M_i},
\label{eq:manprofit}
\end{equation}
where $c_i$ denotes the unit production cost of new products, $k_i$ is the per-unit processing cost associated with returned products (excluding consumer bonuses), $v_i$ represents the remanufacturing or salvage value obtained from each returned unit, and $D_i^*$ denotes equilibrium demand after retailers optimally set prices.

Observe that the manufacturer’s profit function \eqref{eq:manprofit} consists of two distinct components. 
The first term captures the forward-channel margin generated from wholesale sales to retailers. 
The second term captures the reverse-channel margin associated with remanufacturing returned products. 
The take-back bonus $b_i$ directly reduces the net value extracted from each returned unit, while indirectly affecting manufacturer profit by influencing the allocation of returns across competing manufacturers.  We immediately deduce the following result.

\begin{lemma} [Reverse-channel viability]
\label{lem:reverse_viability}
If the net remanufacturing value satisfies $v_i-k_i\le 0$, then in any equilibrium of the game manufacturers optimally set zero take-back bonuses, i.e.,
\[
b_i^* = 0 \quad \text{for all } i \in \{1,2\},
\]
and do not actively collect returned products. Consequently, the reverse channel is inactive and the model reduces to a forward-only Stackelberg pricing game.
\end{lemma}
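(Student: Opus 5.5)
The plan is to work directly with manufacturer $M_i$'s profit \eqref{eq:manprofit} and show that, whenever $v_i-k_i\le 0$, any $b_i>0$ is weakly dominated by $b_i=0$ for every fixed choice of $(w_i,w_j,b_j)$. I will rely on one structural fact. The bonuses enter the retailer subgame nowhere: the retailer profits \eqref{eq:retprofit} depend only on $(p_i,p_j,w_i)$. Hence the Stage~2 Nash prices, and the equilibrium demands $D_1^*,D_2^*$, are functions of $(w_1,w_2)$ alone. It follows that the forward term $(w_i-c_i)D_i^*$ and the total return volume $Q_{\text{tot}}=\theta(D_1^*+D_2^*)\ge 0$ do not depend on $b_i$. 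So $b_i$ affects $\pi_{M_i}$ only through the reverse term $R_i(b_i):=(v_i-k_i-b_i)\,\theta(D_1^*+D_2^*)\,s_i(b_1,b_2)$.

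Next I sign this term. From \eqref{eq:allocation}, $\gamma_r>0$ and $\beta_i b_i\ge 0$ give $s_i(b_1,b_2)\in(0,1)$ for all nonnegative bonuses. When $b_i>0$ we have $v_i-k_i-b_i<0$, so $R_i(b_i)\le 0$, with strict inequality whenever total demand is positive. When $b_i=0$ we get $R_i(0)=(v_i-k_i)\theta(D_1^*+D_2^*)s_i(0,b_j)$.

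The comparison I need is $R_i(0)\ge R_i(b_i)$ for all $b_i>0$. I will show that the map $b\mapsto (v_i-k_i-b)\,s_i(b,b_j)$ is nonincreasing on $[0,\infty)$ when $v_i-k_i\le0$. Its first factor is nonpositive and strictly decreasing. The share $s_i(b,b_j)$ is positive and nondecreasing in $b$; this holds because its derivative is $\beta_i(\beta_jb_j+\gamma_r)/(\cdot)^2\ge 0$. The product of a nonpositive decreasing function and a positive nondecreasing function is nonincreasing. When $v_i-k_i<0$ or $\beta_i>0$ the decrease is strict, so $b_i=0$ is the unique best response. This holds in every equilibrium and for both manufacturers, so $b_1^*=b_2^*=0$.

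The main subtlety I expect is interpreting ``do not actively collect.'' Because $\gamma_r>0$, the inertial returns $\theta(D_1+D_2)s_i(0,0)=\theta(D_1+D_2)/2$ still arrive even at zero bonus. I will address this by noting two points. First, with zero bonuses, collection is purely passive: no incentive is offered and shares are fixed at $1/2$. Second, since $v_i-k_i\le 0$, processing these units is unprofitable, so the manufacturer optimally declines to process them. That yields a reverse term of zero, or at most a constant independent of all decisions if processing is compulsory. In either case the reverse term no longer depends on any strategic variable. Stage~1 therefore reduces to choosing $w_i$ to maximize $(w_i-c_i)D_i^*(w_i,w_j)-O_{M_i}$, which is the forward-only Stackelberg pricing game. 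The weak case $v_i=k_i$ with $\beta_i=0$ needs only a remark that $b_i=0$ remains optimal and can be selected without loss.
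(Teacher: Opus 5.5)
Your argument for $b_i^*=0$ is correct and is essentially the paper's. The paper signs $\partial\pi^{rev}_{M_i}/\partial b_i=-Q_{r,i}+(v_i-b_i-k_i)\,\partial Q_{r,i}/\partial b_i\le 0$ and takes the Kuhn--Tucker boundary solution. You get the same monotonicity without derivatives, and you make explicit (the paper leaves it implicit) that $D_1^*,D_2^*$ depend on $(w_1,w_2)$ only.

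One correction: your weak-case caveat is unnecessary. Write $g(b)>0$ for $s_i$ at own bonus $b$. Then for $0\le b<b'$,
\[
(v_i-k_i-b')\,g(b')<(v_i-k_i-b)\,g(b')\le(v_i-k_i-b)\,g(b).
\]
So the reverse term is strictly decreasing in $b_i$ whenever $D_1^*+D_2^*>0$, including when $v_i=k_i$ and $\beta_i=0$. You need this uniqueness for the ``in any equilibrium'' claim. ``Selected without loss'' would only give existence of a zero-bonus equilibrium.

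The step that fails is the reduction to the forward-only game when processing is compulsory. At $b_1=b_2=0$ the residual reverse term is $\tfrac{\theta}{2}(v_i-k_i)\bigl(D_1^*+D_2^*\bigr)$, and this is not a constant. From Appendix~\ref{app:sub}, $\partial(D_1^*+D_2^*)/\partial w_i=-\alpha_i(1-\varepsilon)/(2-\varepsilon)<0$. Hence when $v_i<k_i$ the Stage-1 objective is
\[
(w_i-c_i)D_i^*+\tfrac{\theta}{2}(v_i-k_i)(D_1^*+D_2^*)-O_{M_i}.
\]
This objective penalizes sales and pushes $w_i$ upward to suppress costly inertial returns. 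The paper's own Theorem~\ref{thm:symmetric}(a) shows this: $w^*$ there retains the $\theta(v-k)$ term even with $b^*=0$. Your sentence ``in either case the reverse term no longer depends on any strategic variable'' is therefore false.

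Your other route, where manufacturers decline to process, does yield the reduction. But it introduces a free-disposal option the model does not contain, since $Q_{r,i}=\theta(D_1+D_2)s_i$ is collected and valued at $v_i-b_i-k_i$ mechanically. You should either state free disposal as an explicit extra assumption or restrict the reduction claim to $v_i=k_i$. Otherwise, conclude only that the bonus decision is inactive. The paper's proof does not address this point either --- it simply asserts that the reverse channel is inactive. So the defect sits in the lemma's final clause, but your proposal should not present it as closed.
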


\noindent \textit{Proof:}
From the manufacturer profit function in \eqref{eq:manprofit}, the reverse-channel payoff of manufacturer $i$ is
\[
\pi_{M_i}^{\emph{rev}}(b_i,b_j)
= (v_i - b_i - k_i)\, Q_{r,i}(b_i,b_j),
\]
where $Q_{r,i}(b_i,b_j)\ge 0$ denotes the quantity of returned products collected by manufacturer $i$ and the take-back bonus satisfies the feasibility constraint $b_i\ge 0$.

Fix the rival’s decision $b_j\ge 0$. Suppose that $v_i-k_i\le 0$. Then for any feasible $b_i\ge 0$,
\[
v_i - b_i - k_i \le v_i - k_i \le 0.
\]
Since $Q_{r,i}(b_i,b_j)\ge 0$ for all $(b_i,b_j)$, it follows that
\[
\pi_{M_i}^{\emph{rev}}(b_i,b_j)\le 0
\quad \text{for all } b_i\ge 0.
\]

To characterize the optimal bonus choice, differentiate $\pi_{M_i}^{\emph{rev}}$ with respect to $b_i$:
\[
\frac{\partial \pi_{M_i}^{\emph{rev}}}{\partial b_i}
= -Q_{r,i}(b_i,b_j)
+ (v_i-b_i-k_i)\frac{\partial Q_{r,i}}{\partial b_i}.
\]
Under the maintained model assumptions, $\frac{\partial Q_{r,i}}{\partial b_i}\ge 0$, and when $v_i-k_i\le 0$ we have $v_i-b_i-k_i\le 0$ for all $b_i\ge 0$. Hence,
\[
\frac{\partial \pi_{M_i}^{\emph{rev}}}{\partial b_i}\le 0
\quad \text{for all } b_i\ge 0,
\]
with strict inequality whenever $Q_{r,i}>0$ or $\frac{\partial Q_{r,i}}{\partial b_i}>0$.

Therefore, $\pi_{M_i}^{\emph{rev}}(b_i,b_j)$ is weakly decreasing in $b_i$ over the feasible set $b_i\ge 0$. By the Kuhn–Tucker optimality conditions for the constrained maximization problem, the optimal solution is attained at the boundary $b_i=0$.
Setting $b_i=0$ eliminates any positive marginal incentive to attract returned products, and any equilibrium allocation of returns yields zero or negative reverse-channel profit. Hence, in equilibrium, manufacturer $i$ optimally chooses
\[
b_i^* = 0.
\]
Since this argument is applicable to both manufacturers, the reverse channel is inactive in equilibrium.
\qed

\textit{Remark:}
Lemma~\ref{lem:reverse_viability} establishes a necessary viability condition for the CLSC. When the net remanufacturing margin for manufacturer $M_i$, $v_i-k_i$ is non-positive, returned products destroy value before incentive payments, eliminating manufacturers' incentives to collect them. Active take-back programs and forward--reverse market coupling arise only when $v_i-k_i>0$ for $i= \{1,2\}$, which is therefore assumed throughout the remainder of the analysis.

\subsection{Assumptions and market structure}

To ensure analytical tractability and consistency with the proposed market structure, we impose the following assumptions throughout the paper.

\noindent\textbf{(A.1)} Manufacturers are responsible for designing consumer take-back programs and processing returned products. Retailers do not participate in reverse logistics and have no decision variables related to product returns.\\
\noindent\textbf{(A.2)} The market potential $\bar d_i$ for each retailer $R_i$ is sufficiently large to ensure strictly positive equilibrium demand under the interior solutions considered in the analysis.\\
\noindent\textbf{(A.3)} Each firm has the option to operate or not-operate. If a firm’s equilibrium profit is negative, it optimally chooses not to operate and earns zero payoff.

Throughout the analysis, we focus on interior wholesale pricing equilibria in which manufacturers earn nonnegative margins from forward sales. Although it is possible for reverse-channel profits to  aggressive wholesale pricing, \textbf{(A.3)} rules out equilibria in which firms operate at a loss. This restriction ensures economically meaningful equilibria and allows us to characterize closed-form pricing solutions. Examining loss-leading strategies supported by reverse-channel profits would require relaxing the operating decision assumption and is left for future research.

In all, the developed model formulation defines a deterministic and analytically tractable framework that integrates forward pricing competition with reverse-channel incentive provision under decentralized decision-making. This allows us to characterize equilibrium pricing, take-back incentives, and their interaction in competitive circular SCs. We now proceed with the equilibrium analysis of the model.

\section{Equilibrium Analysis}\label{sec:equil}
This section characterizes the subgame-perfect equilibrium (SPE) of the three-stage Stackelberg--Nash game using backward induction. 
We begin by solving the retailers’ pricing subgame for given manufacturer decisions. 
We then derive manufacturers’ optimal choices anticipating the retailers’ equilibrium responses, and finally examine the symmetric equilibrium and its implications.

\subsection{Retailer Nash Equilibrium}

Given wholesale prices $\mathbf{w}=(w_1,w_2)$ and take-back bonuses $\mathbf{b}=(b_1,b_2)$ determined by manufacturers in Stage~1, each retailer $R_i$ chooses its retail price $p_i$ to maximize profit:
\begin{equation}
\max_{p_i \ge 0} \; \pi_{R_i}(p_i,p_j;w_i)
= (p_i - w_i)\bigl(\bar d_i - \alpha_i p_i + \varepsilon \alpha_j p_j\bigr) - O_{R_i}.
\end{equation}
We focus on interior solutions. The first-order condition for retailer $i$ is therefore given by
\begin{equation}
\bar d_i + \alpha_i w_i + \varepsilon \alpha_j p_j - 2 \alpha_i p_i = 0.
\tag{R-FOC}
\end{equation}

Solving the system of first-order conditions yields the retailers’ Nash equilibrium prices.

\begin{theorem}[Retailer-stage uniqueness]\label{thm:retailer}
For $\alpha_i>0$ and $\varepsilon \in [0,1)$, the retailer-stage pricing game admits a unique Nash equilibrium. 
The equilibrium prices are given by
\begin{align}
p_1^*(w_1,w_2) &= 
\frac{2(\bar d_1 + \alpha_1 w_1) + \varepsilon(\bar d_2 + \alpha_2 w_2)}
{\alpha_1(4-\varepsilon^2)}, \label{eq:p1star_main}\\[3pt]
p_2^*(w_1,w_2) &= 
\frac{\varepsilon(\bar d_1 + \alpha_1 w_1) + 2(\bar d_2 + \alpha_2 w_2)}
{\alpha_2(4-\varepsilon^2)}. \label{eq:p2star_main}
\end{align}
\end{theorem}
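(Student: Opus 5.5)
The plan is to treat the retailer stage as a two-player game with quadratic payoffs, show that each player has a unique best response, and then solve the resulting linear system.

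\emph{Concavity and best responses.} On the interior region where demand is positive, the $(\cdot)^+$ in \eqref{eq:demand} is inactive by (A.2). Each retailer profit $\pi_{R_i}$ is then a quadratic in its own price $p_i$, with
\[
\frac{\partial^2\pi_{R_i}}{\partial p_i^2}=-2\alpha_i<0 .
\]
So $\pi_{R_i}$ is strictly concave in $p_i$ for every fixed $p_j$. The first-order condition (R-FOC) is therefore necessary and sufficient, and it gives a unique best response that is affine in the rival's price:
\[
p_i=BR_i(p_j)=\frac{\bar d_i+\alpha_i w_i+\varepsilon\alpha_j p_j}{2\alpha_i}.
\]
Any Nash equilibrium must lie on both best-response lines at once. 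Equilibria are thus exactly the solutions of a $2\times2$ linear system in $(p_1,p_2)$.

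\emph{Uniqueness.} Write the system as $A\,p=r$, where
\[
A=\begin{pmatrix}2\alpha_1 & -\varepsilon\alpha_2\\ -\varepsilon\alpha_1 & 2\alpha_2\end{pmatrix},
\qquad
r=\begin{pmatrix}\bar d_1+\alpha_1w_1\\ \bar d_2+\alpha_2w_2\end{pmatrix}.
\]
Then $\det A=\alpha_1\alpha_2(4-\varepsilon^2)$, which is strictly positive because $\alpha_i>0$ and $\varepsilon\in[0,1)$. Hence the system has exactly one solution, and the equilibrium is unique.

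As an alternative argument, the composition $BR_1\circ BR_2$ has slope $\varepsilon^2/4<1$. It is therefore a contraction, which gives both existence and uniqueness.

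\emph{Closed form.} Apply Cramer's rule, i.e.\ $p=A^{-1}r$, with
\[
A^{-1}=\frac{1}{\alpha_1\alpha_2(4-\varepsilon^2)}\begin{pmatrix}2\alpha_2 & \varepsilon\alpha_2\\ \varepsilon\alpha_1 & 2\alpha_1\end{pmatrix}.
\]
Multiplying out and cancelling $\alpha_2$ in the first row and $\alpha_1$ in the second yields exactly \eqref{eq:p1star_main}--\eqref{eq:p2star_main}.

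\emph{Main obstacle.} The only delicate step is justifying the restriction to interior solutions, namely that the constraints $p_i\ge0$ and the $(\cdot)^+$ truncation do not bind. For $p_i\ge0$: with $\bar d_i>0$ and $w_i\ge 0$, the numerators in \eqref{eq:p1star_main}--\eqref{eq:p2star_main} are positive, so the prices are positive. For the demand truncation, I would substitute back to obtain
\[
D_i^*=\alpha_i\,(p_i^*-w_i).
\]
This follows directly from (R-FOC), since $D_i=\bar d_i-\alpha_ip_i+\varepsilon\alpha_jp_j=\alpha_i(p_i-w_i)$ at the first-order condition. Its positivity is exactly what (A.2) guarantees.

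Outside the interior region, $\pi_{R_i}$ equals $-O_{R_i}$ whenever demand is zero. This is dominated by any price in $(w_i,\,(\bar d_i+\varepsilon\alpha_jp_j)/\alpha_i)$, which gives positive margin on positive demand, so the interior best response remains globally optimal.
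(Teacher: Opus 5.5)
Your proposal is correct and follows essentially the same route as the paper's proof: strict concavity from $\partial^2\pi_{R_i}/\partial p_i^2=-2\alpha_i<0$, the same linear first-order system with determinant $\alpha_1\alpha_2(4-\varepsilon^2)>0$, and inversion to obtain the closed form. Your contraction remark and your check that $p_i\ge 0$ and the $(\cdot)^+$ truncation do not bind go beyond the paper, which simply restricts to interior solutions; one caveat is that the domination argument needs the interval $(w_i,(\bar d_i+\varepsilon\alpha_j p_j)/\alpha_i)$ to be nonempty, which $D_i^*>0$ guarantees at $p_j=p_j^*$ (enough to confirm the candidate is an equilibrium) but not for arbitrary rival prices.
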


\noindent\textit{Proof:} See Appendix~\ref{app:retailer_proof}. \qed

\textit{Remark:}
Equations~\eqref{eq:p1star_main}--\eqref{eq:p2star_main} characterize retailers’ optimal pricing responses to manufacturers’ wholesale prices. 
Each retailer’s equilibrium price is increasing in its own manufacturer's wholesale price $w_i$, reflecting pass-through of upstream cost changes determined by the manufacturer. 
Moreover, a higher wholesale price charged to the rival retailer, $w_j$, increases retailer $i$’s equilibrium price through reduced competitive pressure in the downstream market, with the magnitude of this effect governed by the substituitability parameter $\varepsilon$. 
When $\varepsilon \approx 0$, retailers operate in effectively segmented markets and exhibit negligible strategic interaction; whereas when $\varepsilon \approx 1$, it represents stronger price interdependence and greater demand diversion across retailers.

\subsection{Manufacturers’ Leader-Stage Conditions}
To characterize manufacturers’ optimal decisions, we substitute the equilibrium retail prices $p_i^*(\mathbf{w})$ for each $i=\{1,2\}$ obtained from Theorem \ref{thm:retailer} into demand functions \eqref{eq:demand} yields linearized demands $D_i^*(w_1,w_2)$ ( see Appendix \ref{app:sub}). 
Each manufacturer $M_i$ then solves:
\[
\max_{w_i,b_i \ge 0} \; \pi_{M_i}(\mathbf{w},\mathbf{b})
= (w_i - c_i)D_i^*(w_1,w_2)
+ (v_i - b_i - k_i)\theta\!\!\sum_{k=1}^2 D_k^*(w_1,w_2) s_i(b_1,b_2)
- O_{M_i}.
\]
The interior first-order conditions are:
\begin{align}
0 &= D_i^* + (w_i - c_i)\frac{\partial D_i^*}{\partial w_i}
+ (v_i - b_i - k_i)\,\theta\sum_{k=1}^2 
\frac{\partial D_k^*}{\partial w_i}\, s_i(b_1,b_2), \label{eqn_1} \\[2pt]
0 &= -s_i(b_1,b_2) + (v_i - b_i - k_i)\frac{\partial s_i}{\partial b_i}. \label{eqn_2}
\end{align}

\textit{Remark:}
Equation \eqref{eqn_1} balances the marginal gain from higher wholesale price (first term) against the induced loss in retailer demand (second term) and the indirect effect on reverse-channel value through $\theta(v_i-b_i-k_i)$.
Equation \eqref{eqn_2} equates the marginal benefit of a higher take-back bonus (increasing returns) with its marginal cost. 
Together, these form the manufacturers’ best-response system.

\subsection{Symmetric Equilibrium}

To obtain analytical insights and understand the structural effects of competition and return incentives, we focus on a symmetric benchmark in which manufacturers and retailers are ex ante identical. 
Specifically, we assume
\[
\bar d_1=\bar d_2=\bar d,\quad 
\alpha_1=\alpha_2=\alpha,\quad 
c_1=c_2=c,\quad 
v_1=v_2=v,\quad 
k_1=k_2=k,\quad 
\beta_1=\beta_2=\beta,
\]
and $O_{M_i}=O_M$, $O_{R_i}=O_R$. 
We therefore seek a symmetric equilibrium in which manufacturers choose identical wholesale prices and take-back bonuses, i.e.,
\[
(w_1,w_2,b_1,b_2)=(w,w,b,b).
\]

Under symmetry, the retailers’ equilibrium prices from Theorem ~\ref{thm:retailer} and demands (obtained by substituting prices in \eqref{eq:demand}) simplify to
\begin{eqnarray}\label{eqn_sym_pr_dem}
 p = \frac{\bar d/\alpha + w}{2 - \varepsilon}, 
\qquad
D(w) = \frac{\bar d - \alpha(1-\varepsilon) w}{2 - \varepsilon}.   
\end{eqnarray}

\begin{theorem}[Symmetric equilibrium and bonus threshold]\label{thm:symmetric}
Under the symmetric setting described above, the equilibrium outcomes are characterized as follows.
\begin{enumerate}[label=(\alph*),noitemsep]
  \item \textbf{Proportional allocation.} 
  If returned products are allocated proportionally or if consumer bonuses do not affect return behavior, the equilibrium take-back bonus is $b^*=0$, and the corresponding wholesale price is
  \[
  w^* = \frac{\bar d}{2\alpha(1-\varepsilon)} + 
  \frac{c - \theta(v - k)}{2}.
  \]

  \item \textbf{Inertia--responsiveness allocation.} 
  Under the inertia--responsiveness allocation rule \eqref{eq:allocation}, the equilibrium take-back bonus is given by
  \[
  b^* = \frac{v - k}{3} - \frac{2\gamma_r}{3\beta},
  \]
  which is feasible if and only if $\beta(v-k) > 2\gamma_r$. 
  The corresponding equilibrium wholesale price is
  \[
  w^* = \frac{\bar d}{2\alpha(1-\varepsilon)} 
  + \frac{c - \theta(v - b^* - k)}{2}.
  \]
\end{enumerate}
\end{theorem}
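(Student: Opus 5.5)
The plan is to solve the manufacturers' stage by backward induction. We take the retailer equilibrium of Theorem~\ref{thm:retailer} as given, specialize everything to the symmetric parameters, and treat the two first-order conditions \eqref{eqn_1}--\eqref{eqn_2} separately. This works because the allocation share $s_i$ depends only on $(b_1,b_2)$, while the total return volume $\theta\sum_k D_k^*$ depends only on $(w_1,w_2)$. As a result the bonus condition decouples from the pricing condition: in \eqref{eqn_2} the strictly positive factor $\theta\sum_k D_k^*$ (positive by (A.2)) has already been divided out, so $b^*$ is pinned down by the allocation rule alone.

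\textbf{Part (b) first, the bonus.} From \eqref{eq:allocation},
\[
\frac{\partial s_i}{\partial b_i}=\frac{\beta(\beta b_j+\gamma_r)}{(\beta b_1+\beta b_2+2\gamma_r)^2}.
\]
At the symmetric point this gives $s_i=1/2$ and $\partial s_i/\partial b_i=\beta/\bigl(4(\beta b+\gamma_r)\bigr)$. Substituting into \eqref{eqn_2} yields $2(\beta b+\gamma_r)=\beta(v-b-k)$, hence $b^*=(v-k)/3-2\gamma_r/(3\beta)$. We must also check that this is a maximum and not merely a stationary point. For that we verify that $(v-b_i-k)\,s_i(b_i,b)$ is strictly concave in $b_i$, which follows from $s_i$ being increasing and concave in $b_i$ (a ratio of an affine numerator over an affine denominator with positive rival term). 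Feasibility $b^*>0$ is then equivalent to $\beta(v-k)>2\gamma_r$. When this fails, the derivative of the reverse payoff is negative at $b_i=0$, and the Kuhn--Tucker argument used in Lemma~\ref{lem:reverse_viability} gives the corner $b^*=0$. This establishes the ``if and only if'' statement.

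\textbf{Part (a).} If bonuses do not affect returns ($\beta=0$, or more generally $\partial s_i/\partial b_i=0$), then the left side of \eqref{eqn_2} equals $-s_i<0$. The reverse payoff is therefore strictly decreasing in $b_i$, and the argument of Lemma~\ref{lem:reverse_viability} again gives $b^*=0$. Under proportional allocation ($\gamma_r=0$) we instead argue directly that no symmetric interior point exists. Setting $\gamma_r=0$ in the formula above gives $b=(v-k)/3$. We then check that the share function $b_i/(b_1+b_2)$ cannot support a positive bonus in equilibrium, since it is undefined at $b_1=b_2=0$. We would take the paper's convention that the equilibrium collapses to $b^*=0$ and present it as the degenerate case of the threshold with zero responsiveness. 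This is a delicate point and we would state the convention explicitly.

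\textbf{Wholesale price (both parts).} We substitute the symmetric demand $D(w)$ from \eqref{eqn_sym_pr_dem}, whose slope is $D'(w)=-\alpha(1-\varepsilon)/(2-\varepsilon)$. With $s_i=1/2$ and two equal demands, manufacturer $i$'s collected returns equal $\theta D(w)$. Condition \eqref{eqn_1} along the symmetric path then reads
\[
D(w)-\bigl[(w-c)+\theta(v-b^*-k)\bigr]\frac{\alpha(1-\varepsilon)}{2-\varepsilon}=0 .
\]
Multiplying through by $2-\varepsilon$ gives $\bar d-\alpha(1-\varepsilon)\bigl(2w-c+\theta(v-b^*-k)\bigr)=0$, which yields the stated $w^*$; setting $b^*=0$ gives the formula in (a).

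\textbf{Main obstacle.} The hardest step is justifying this last derivative. A genuine unilateral deviation in $w_i$ uses the own-slope of $D_i^*$, namely $-\alpha(2-\varepsilon^2)/(4-\varepsilon^2)$, and not the symmetric-path slope, so the two approaches could differ. We would first compute both slopes from Theorem~\ref{thm:retailer}. The cross-effects do combine correctly in the return term: the sum is $\sum_k \partial D_k^*/\partial w_i=-\alpha(1-\varepsilon)/(2-\varepsilon)$. The forward-margin term, however, differs. We would therefore state explicitly that the characterization uses the symmetric reduced-form demand \eqref{eqn_sym_pr_dem}, as the theorem's setup indicates. Finally, we would verify second-order conditions ($D''=0$, so the profit is concave in $w$) and positivity of the margin $w^*-c$ under (A.2)--(A.3).
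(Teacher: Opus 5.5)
Your overall route matches the paper's. You solve \eqref{eqn_2} at the symmetric point, check concavity and the corner $b_i=0$, and then insert $s_i=\tfrac12$ and the symmetric reduced-form demand into the wholesale-price condition. Your bonus step is actually more reliable than the appendix. Your $\partial s_i/\partial b_i=\beta(\beta b_j+\gamma_r)/B^2$ is the correct derivative. The appendix puts $\beta b_j+2\gamma_r$ in the numerator, and its equation $2(\beta b+\gamma_r)^2=\beta(v-b-k)(\beta b+2\gamma_r)$ does not in fact reduce to $3\beta b=\beta(v-k)-2\gamma_r$. Your linear condition $2(\beta b+\gamma_r)=\beta(v-b-k)$ gives the stated $b^*$ directly.

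The step that would fail is your $\gamma_r=0$ paragraph in part (a). You propose to show that $s_i=b_i/(b_1+b_2)$ cannot support a positive bonus, but it does. Take $b_j=(v-k)/3>0$. The reverse payoff $(v-k-b_i)\,b_i/(b_i+b_j)$ is strictly concave in $b_i$, with second derivative $-2b_j(b_j+v-k)/(b_i+b_j)^3$, and its derivative vanishes at $b_i=(v-k)/3$. So $b_1=b_2=(v-k)/3$ is an equilibrium.

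This is simply your formula from (b) at $\gamma_r=0$, where the threshold $\beta(v-k)>2\gamma_r=0$ holds. It therefore cannot be presented as a degenerate case of that threshold. Conversely, $(0,0)$ is not an equilibrium under any tie-breaking at the origin. A firm whose share there is at most $\tfrac12$ gains by moving to a tiny $b_i>0$, which captures all returns.

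The share being undefined at the origin proves nothing, and invoking a ``convention'' would only assert a false claim. Part (a) holds only in the reading the paper's proof actually uses: shares fixed at $s_i=\tfrac12$ and unaffected by bonuses. Your first argument already proves exactly that case, since with $\partial s_i/\partial b_i=0$ the left side of \eqref{eqn_2} equals $-s_i<0$. Drop the $\gamma_r=0$ attempt and state that restriction explicitly.

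One correction to your ``main obstacle'': the return term does not combine correctly either. In \eqref{eqn_1} the sum $\sum_k\partial D_k^*/\partial w_i=-\alpha(1-\varepsilon)/(2-\varepsilon)$ is multiplied by $s_i=\tfrac12$. The unilateral return term is therefore $\tfrac12\theta(v-b-k)D'(w)$, half of the term in your symmetric-path condition. For example, at $\varepsilon=0$, where the forward slopes coincide, \eqref{eqn_1} gives $w=\frac{\bar d}{2\alpha}+\frac c2-\frac{\theta(v-b-k)}{4}$.

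So the theorem's $w^*$ is the optimum along the symmetric path, which is exactly what the paper's appendix computes. Your derivation reproduces it, but you should present it as that convention, not as agreeing with \eqref{eqn_1} in the return term.
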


\noindent\textit{Proof:} See Appendix~\ref{app:symmetric_proof}. \qed

\textit{Remark:}
Part~(a) establishes a benchmark result; when take-back incentives do not influence the allocation of returned products, manufacturers optimally set zero bonuses, reproducing the classical Stackelberg outcome observed in forward-only supply chains. 
Part~(b) shows how introducing the inertia--responsiveness mechanism fundamentally alters equilibrium behavior. 
Higher consumer responsiveness to bonuses (larger $\beta$) or lower baseline inertia (smaller $\gamma_r$) increases the equilibrium take-back bonus, as manufacturers compete to attract returned products.
The feasibility condition $\beta(v-k) > 2\gamma_r$ provides an explicit threshold for the emergence of active take-back incentives. 
When this condition is not satisfied, inertia dominates responsiveness and manufacturers optimally refrain from offering bonuses. 

It is important to observe that in the symmetric equilibrium, each manufacturer receives an equal share of returned products, 
$s_i= \frac{1}{2}$ for each $i$,(see Appendix \ref{app:symmetric_proof}) independent of the bonus level. Nevertheless, take-back incentives might remain strictly positive because bonuses affect the marginal gain from unilateral deviations. A marginal increase in $b_i$
 raises manufacturer 
$i’s$ return share relative to its rival , even though equilibrium shares coincide under symmetry. As a result, each manufacturer competes for returns at the margin, which sustains strictly positive equilibrium bonuses despite equal allocation shares in equilibrium.

\begin{theorem}[Forward--reverse coupling under proportional allocation]
\label{thm:cross_market_proportional}
Consider the symmetric equilibrium under proportional return allocation described in Theorem \ref{thm:symmetric}(a), and suppose that the net remanufacturing margin satisfies $v-k>0$. Holding all other parameters fixed, the equilibrium wholesale price $w^*$ satisfies
\[
\frac{\partial w^*}{\partial v} = -\frac{\theta}{2} < 0,
\qquad
\frac{\partial w^*}{\partial \theta} = -\frac{v-k}{2} < 0.
\]
Hence, an increase in either the remanufacturing value $v$ or the return fraction $\theta$ strictly reduces the equilibrium wholesale price.
\end{theorem}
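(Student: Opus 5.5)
The plan is to start from the closed-form expression in Theorem~\ref{thm:symmetric}(a),
\[
w^* = \frac{\bar d}{2\alpha(1-\varepsilon)} + \frac{c - \theta(v-k)}{2},
\]
and differentiate it directly. The first step is to check that this formula remains valid as the parameters vary, so that the derivative with respect to $v$ or $\theta$ is legitimate. For this I would point out three facts. First, $b^*=0$ holds on an open set of parameters, since under proportional allocation the bonus is pinned to the corner irrespective of $v$ and $\theta$. Second, the interior first-order condition \eqref{eqn_1}, evaluated at symmetry with $s_i=\tfrac12$ and $b=0$, is linear in $w$ with a strictly negative slope. Hence it has a unique solution that depends affinely, and therefore smoothly, on $(v,\theta)$. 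Third, by (A.2) and the maintained interiority restriction, small perturbations of $v$ or $\theta$ keep demand positive and keep the equilibrium interior. The equilibrium map $(v,\theta)\mapsto w^*$ is thus the affine expression above on a neighbourhood of any admissible parameter point.

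The second step is the differentiation itself. The first term $\bar d/(2\alpha(1-\varepsilon))$ and the constant $c/2$ do not depend on $v$ or $\theta$. Only $-\theta(v-k)/2$ varies, which gives
\[
\frac{\partial w^*}{\partial v} = -\frac{\theta}{2}, \qquad \frac{\partial w^*}{\partial \theta} = -\frac{v-k}{2}.
\]
The signs then follow from the hypotheses. Since $\theta\in(0,1)$, the first derivative is strictly negative. Since $v-k>0$ is assumed (this is the viability condition from the remark after Lemma~\ref{lem:reverse_viability}), the second derivative is also strictly negative. Strict monotonicity of $w^*$ in each parameter follows, because $w^*$ is affine in $v$ for fixed $\theta$ and affine in $\theta$ for fixed $v$.

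The main obstacle is not the calculus but justifying that $b^*=0$ stays put, so that there is no indirect effect through $b^*$. If the bonus moved with $v$, an envelope-type term would appear. I would handle this by invoking Theorem~\ref{thm:symmetric}(a) as a parameter-independent statement: bonuses do not influence allocation, so $b^*\equiv 0$. This makes $w^*$ a total (not merely partial) function of $(v,\theta)$ given by the displayed formula.

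Finally, I would add an interpretive sentence. Through the term $\theta(v-k)\sum_k\partial D_k^*/\partial w_i$ in \eqref{eqn_1}, a higher reverse-channel margin per unit sold raises the marginal value of sales. This pushes the manufacturer to cut $w$ in order to expand volume.
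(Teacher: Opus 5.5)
Your core argument is the paper's own proof: take the closed form for $w^*$ from Theorem~\ref{thm:symmetric}(a) as given, differentiate term by term, and sign the results using $\theta>0$ and $v-k>0$. That is all the paper does, and it suffices.

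The extra regularity check in your first step, however, rests on a false identification and should be repaired or dropped. The displayed $w^*$ is \emph{not} the symmetric root of \eqref{eqn_1}. Evaluating \eqref{eqn_1} at $w_1=w_2=w$, $b=0$, $s_i=\tfrac12$ uses the unilateral slope $\partial D_i^*/\partial w_i=-\alpha(2-\varepsilon^2)/(4-\varepsilon^2)$. It also uses the reverse-channel factor $\tfrac12\sum_k\partial D_k^*/\partial w_i=-\tfrac12\,\alpha(1-\varepsilon)/(2-\varepsilon)$. Its root is unique, as you say, but it satisfies
\[
\frac{\partial w}{\partial v}=-\frac{\theta}{2}\cdot\frac{(1-\varepsilon)(2+\varepsilon)}{4-\varepsilon-2\varepsilon^2}.
\]
This equals $-\theta/4$ at $\varepsilon=0$ and is never $-\theta/2$ on $[0,1)$.

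The formula you are differentiating comes instead from the reduced symmetric objective \eqref{eqn_manu_w}, $[(w-c)+\theta(v-k)]D(w)$, in which the common price $w$ moves both demands at once. Its first-order condition $D(w)+[(w-c)+\theta(v-k)]D'(w)=0$ has slope $2D'(w)<0$. Its root depends on $v$ and $\theta$ only through $\theta(v-k)$. That is the equation your smoothness argument should cite. Alternatively, omit the check, since the theorem is stated conditionally on Theorem~\ref{thm:symmetric}(a).

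Relatedly, because $w^*$ depends on $\theta(v-k)$, it is affine in $v$ and in $\theta$ separately but not jointly affine in $(v,\theta)$, as your first paragraph claims. Your later sentence states this correctly.
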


\noindent\textit{Proof:}
See Appendix \ref{app:cross_market_proof}.
\qed

\textit{Remark:}
Theorem~\ref{thm:cross_market_proportional} formalizes a forward--reverse coupling effect in closed-loop supply chains. An increase in remanufacturing profitability, arising either from higher per-unit recovery value or a higher return rate, raises the marginal value of expanding total sales. Anticipating additional returns generated by greater sales volumes, manufacturers optimally reduce wholesale prices to stimulate downstream demand. This mechanism operates even when return allocation is proportional and take-back incentives are absent, highlighting that reverse-channel profitability can influence forward pricing decisions independently of active bonus competition.

\begin{corollary}[Forward--reverse coupling under inertia--responsiveness]
\label{cor:coupling_inertia}
Consider the symmetric equilibrium under the inertia--responsiveness allocation rule described in Theorem~\ref{thm:symmetric}(b), and assume the interior regime $\beta(v-k)>2\gamma_r$. Holding all other parameters fixed, the equilibrium wholesale price $w^*$ satisfies
\[
\frac{\partial w^*}{\partial v} = -\frac{\theta}{3} < 0,
\qquad
\frac{\partial w^*}{\partial \theta} = -\frac{v-b^*-k}{2} < 0.
\]
Hence, increases in either the remanufacturing value $v$ or the return fraction $\theta$ strictly reduce the equilibrium wholesale price.
\end{corollary}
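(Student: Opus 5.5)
The plan is to take the closed-form expressions from Theorem~\ref{thm:symmetric}(b) as given, substitute the equilibrium bonus $b^*$ into the wholesale-price formula, and differentiate directly. This mirrors the proof of Theorem~\ref{thm:cross_market_proportional}. The one new feature is that $w^*$ now depends on $v$ both directly and through $b^*$, so the chain rule has to be tracked.

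\emph{Step 1 (reduced form for $w^*$).} From Theorem~\ref{thm:symmetric}(b), in the interior regime $\beta(v-k)>2\gamma_r$ we have $b^*=\frac{v-k}{3}-\frac{2\gamma_r}{3\beta}$. I will first compute the equilibrium net reverse margin:
\[
v-b^*-k=\frac{2(v-k)}{3}+\frac{2\gamma_r}{3\beta}.
\]
Substituting this into $w^*=\frac{\bar d}{2\alpha(1-\varepsilon)}+\frac{c-\theta(v-b^*-k)}{2}$ gives
\[
w^*=\frac{\bar d}{2\alpha(1-\varepsilon)}+\frac{c}{2}-\frac{\theta}{3}(v-k)-\frac{\theta\gamma_r}{3\beta}.
\]
This is affine in $v$ and in $\theta$ separately.

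\emph{Step 2 (derivative in $v$).} Differentiating the reduced form gives $\partial w^*/\partial v=-\theta/3$, which is negative since $\theta\in(0,1)$. Equivalently, by the chain rule, $-\frac{\theta}{2}\bigl(1-\partial b^*/\partial v\bigr)=-\frac{\theta}{2}\cdot\frac{2}{3}$. The interpretation is that one third of any increase in $v$ is competed away into a higher bonus, which attenuates the coefficient $-\theta/2$ of Theorem~\ref{thm:cross_market_proportional}.

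\emph{Step 3 (derivative in $\theta$).} The key observation is that $b^*$ does not depend on $\theta$. This holds because condition \eqref{eqn_2} involves only $s_i$ and $\partial s_i/\partial b_i$: the factor $\theta\sum_k D_k^*$ cancels, so the bonus first-order condition is independent of $\theta$. Hence $\partial w^*/\partial\theta=-(v-b^*-k)/2$. For the sign, I will use Step~1. The interior condition $\beta(v-k)>2\gamma_r>0$ implies $v-k>0$, and therefore $v-b^*-k=\frac{2(v-k)}{3}+\frac{2\gamma_r}{3\beta}>0$. This gives strict negativity.

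\emph{Main obstacle.} The main obstacle is justifying that $w^*$ may be differentiated as the explicit closed form of Theorem~\ref{thm:symmetric}(b). This requires the interior regime to persist under small parameter perturbations. That holds because the inequality $\beta(v-k)>2\gamma_r$ is strict, so it remains valid in a neighborhood of the given parameters. It also requires that $\theta$ not enter $b^*$, which follows from the cancellation noted in Step~3. Once these two points are in place, the remaining computation is routine.
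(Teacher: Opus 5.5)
Your proposal is correct and follows the paper's proof: you substitute $b^*=\frac{v-k}{3}-\frac{2\gamma_r}{3\beta}$ into the closed form for $w^*$ from Theorem~\ref{thm:symmetric}(b) and differentiate, using that $b^*$ does not depend on $\theta$. Your explicit identity $v-b^*-k=\frac{2(v-k)}{3}+\frac{2\gamma_r}{3\beta}>0$ gives the sign of $\partial w^*/\partial\theta$ more transparently than the paper, which only asserts $v-b^*-k>0$.
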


\noindent\textit{Proof:}
See Appendix \ref{app_coupling}.
\qed

\textit{Remark:}
Corollary~\ref{cor:coupling_inertia}, together with Theorem~\ref{thm:cross_market_proportional}, shows that the forward--reverse coupling between wholesale pricing and remanufacturing profitability is robust to the choice of return-allocation mechanism. In both proportional and inertia--responsiveness settings, higher remanufacturing value or return rates induce manufacturers to lower wholesale prices in order to expand sales and capture additional returns. However, consumer inertia attenuates this response (as $|\frac{\partial w^{*}}{\partial v}| = \frac{\theta}{3} < \frac{\theta}{2} $ ); part of the incremental remanufacturing value is absorbed through higher take-back bonuses rather than being fully passed through into lower wholesale prices, resulting in a weaker price response than under proportional allocation.

\subsection{Welfare Comparison}

We conclude the analytical analysis by comparing the decentralized Stackelberg--Nash equilibrium with a socially optimal benchmark. 
The objective of this comparison is to determine whether decentralized pricing and take-back decisions internalize the full social value created by remanufacturing and product recovery.

Consider a benevolent social planner who chooses pricing and incentive variables to maximize total surplus. 
Total surplus is defined as the sum of consumer surplus, firms’ profits in both the forward and reverse channels, and the net remanufacturing value generated by returned products. 
Since take-back bonuses represent pure monetary transfers between consumers and manufacturers, they do not affect total surplus and therefore are omitted from the social planner’s decision variables. The planner internalizes the full remanufacturing value $\theta v$ generated by returns, but does not need to explicitly choose bonus levels.
In particular, the planner recognizes that each unit sold generates an expected social benefit of $\theta v$ through subsequent recovery.

In contrast, under decentralized decision-making, each manufacturer internalizes only the private net remanufacturing value associated with returned products, namely $\theta(v-b-k)$ per unit sold, since take-back bonuses and processing costs are privately borne. 
This difference implies that decentralized manufacturers do not fully internalize the positive externality created by forward sales through the reverse channel.

Let $D(w)$ denote equilibrium demand as a function of the wholesale price in \eqref{eqn_sym_pr_dem}).
Under decentralized competition and symmetric case, manufacturers choose prices and incentives to satisfy the private optimality condition (see \eqref{eq:manprofit} and $s_i = \frac{1}{2}$ from \eqref{eq:allocation} in symmetric case). 
\begin{eqnarray}\label{eqn_welfare_1}
    D(w)+\big[(w-c)+\theta(v-b-k)\big]D'(w)=0.
\end{eqnarray}
Under the social planner’s problem, the corresponding first-order condition is
\begin{eqnarray}\label{eqn_welfare_2}
  D(w)+\big[(w-c)+\theta v\big]D'(w)=0.  
\end{eqnarray}

\begin{theorem}[Welfare inefficiency]
\label{thm:welfare}
For any $\theta v>0$, the wholesale price  chosen under
the decentralized equilibrium $w^{*}_{D}$ is strictly higher than the wholesale price that
maximizes total surplus under the social planner $w^{*}_{SP}$. Consequently,
equilibrium market demand and total return volumes are strictly lower under the
decentralized equilibrium than under the socially optimal outcome.
\end{theorem}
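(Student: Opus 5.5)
The plan is to work directly with the two first-order conditions \eqref{eqn_welfare_1} and \eqref{eqn_welfare_2}, using the linear symmetric demand from \eqref{eqn_sym_pr_dem}. There $D(w)=\frac{\bar d-\alpha(1-\varepsilon)w}{2-\varepsilon}$ and $D'(w)=-\frac{\alpha(1-\varepsilon)}{2-\varepsilon}<0$ is a constant, since $\varepsilon\in[0,1)$.

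The first step is to write each condition as $F(w)+K=0$, where
\[
F(w)=D(w)+(w-c)D'(w),
\]
$K_D=\theta(v-b^*-k)D'$ for the decentralized case, and $K_{SP}=\theta v D'$ for the planner. Because $D$ is affine, $F$ is affine with slope $2D'<0$, so each equation has a unique root. Solving explicitly gives
\[
w^*=\frac{\bar d}{2\alpha(1-\varepsilon)}+\frac{c-\Lambda}{2},
\]
with $\Lambda=\theta(v-b^*-k)$ in the decentralized case and $\Lambda=\theta v$ for the planner. This matches the form in Theorem~\ref{thm:symmetric}. Subtracting the two roots yields
\[
w^*_D-w^*_{SP}=\frac{\theta v-\theta(v-b^*-k)}{2}=\frac{\theta(b^*+k)}{2}.
\]

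The main obstacle is the strictness of this difference. We need $b^*+k>0$, and $k$ is only implicitly a nonnegative processing cost. The strict inequality is therefore secured as follows:
\begin{itemize}
\item In the proportional case of Theorem~\ref{thm:symmetric}(a), $b^*=0$, so strictness requires $k>0$.
\item In the interior inertia--responsiveness regime of Theorem~\ref{thm:symmetric}(b), $b^*>0$ under $\beta(v-k)>2\gamma_r$, and $k\ge0$ then suffices.
\end{itemize}
I would state explicitly that $k\ge 0$, with $k>0$ or $b^*>0$. The hypothesis $\theta v>0$ then gives $\theta>0$, hence $w^*_D>w^*_{SP}$. If the authors intend the gap to come purely from the bonus transfer and the processing cost being privately borne, this is exactly the wedge $\theta(b^*+k)$. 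I would also remark on the degenerate case $b^*=k=0$, where the two coincide. A monotone-comparison alternative avoids explicit formulas: $F$ is strictly decreasing, and $K_{SP}<K_D$ because $D'<0$ and $\theta v>\theta(v-b^*-k)$. The root $w$ solving $F(w)=-K$ is therefore larger when $-K$ is smaller, i.e.\ for the decentralized case.

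Finally, I would transfer the price ranking to quantities. Retail prices $p=\frac{\bar d/\alpha+w}{2-\varepsilon}$ are increasing in $w$. Demand $D(w)$ is strictly decreasing because $D'<0$, so $D(w^*_D)<D(w^*_{SP})$. Total market demand $2D$ and total returns $Q_{\text{tot}}=\theta\cdot 2D(w)$ with $\theta>0$ are then strictly lower under decentralization. Assumption (A.2) ensures both demands are interior (positive), so the $(\cdot)^+$ truncation never binds and the strict comparison survives.
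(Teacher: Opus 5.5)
Your proposal is correct and follows essentially the paper's argument. The paper defines the two first-order maps $F$ and $G$, notes both are affine with slope $2D'(w)<0$, and uses $G(w)<F(w)$ pointwise to place the planner's root strictly below the decentralized one; this is exactly your ``monotone-comparison alternative,'' and your explicit roots simply quantify the wedge as $w^*_D-w^*_{SP}=\theta(b^*+k)/2$. Your caveat on strictness is well taken: the paper gets $\theta v>\theta(v-b-k)$ only by asserting ``$b\ge 0$ and $k>0$'' inside the proof, so the result genuinely needs $k>0$ (or $b^*>0$), which the model section never states explicitly.
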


\noindent\textit{Proof:} See Appendix~\ref{app:welfare_proof}. \qed

\textit{Remark:}
Theorem~\ref{thm:welfare} identifies a fundamental inefficiency arising from decentralized decision-making in CLSC. 
While remanufacturing generates additional social value through resource conservation and waste reduction, individual manufacturers only partially internalize this value when setting wholesale prices and take-back bonuses. 
As a result, equilibrium prices remain inefficiently high and incentives for product recovery are weaker than socially optimal outcome.
Mechanisms such as remanufacturing subsidies, cost-sharing schemes for take-back programs, or regulatory incentives that reward recovery activity can help align decentralized decisions with socially optimal outcomes.

\section{Numerical Illustration and Discussion}\label{sec:numerical}

This section provides numerical illustrations of the analytical results. 
Using the closed-form equilibrium expressions derived earlier, we examine how key parameters including consumer behavior, remanufacturing value, and the degree of product substituitability shape equilibrium prices, take-back incentives, and return volumes.

\subsection{Baseline equilibrium outcomes}

We begin with a symmetric baseline parameterization:
\[
\bar d = 200,\quad 
\alpha = 4,\quad 
c = 20,\quad 
v = 60,\quad 
k = 10,\quad 
\theta = 0.3,\quad 
\beta = 1.2,\quad 
\gamma_r = 10,\quad 
\varepsilon = 0.4.
\]
These values satisfy all feasibility conditions and yield interior equilibria under both proportional and inertia--responsiveness allocation mechanisms.

Table~\ref{tab:baseline} reports the corresponding symmetric equilibrium outcomes.
Under proportional allocation, take-back bonuses optimally collapse to zero, confirming the free-riding outcome predicted by Theorem~\ref{thm:symmetric}(a). 
Under the inertia--responsiveness mechanism, a strictly positive bonus emerges endogenously, leading to lower wholesale prices and higher return volumes. 
This comparison illustrates how behavioral frictions break the zero-bonus equilibrium and improve circular performance, as established in Theorem~\ref{thm:symmetric}(b).

\begin{table}[h!]
\centering
\begin{tabular}{lcccc}
\toprule
\textbf{Allocation rule} & $w^*$ & $p^*$ & $b^*$ & $Q_r$ \\
\midrule
Proportional ($s_i=\tfrac{1}{2}$) 
& 49.17 & 63.17 & 0.00 & 31.50 \\
Inertia--Responsiveness (\ref{eq:allocation}) 
& 47.65 & 62.45 & 8.33 & 36.90 \\
\bottomrule
\end{tabular}
\vspace{2mm}
\caption{Baseline equilibrium outcomes under alternative return-allocation mechanisms}
\label{tab:baseline}
\end{table}

\subsection{Behavioral drivers of take-back incentives}

Figure~\ref{fig:panel1} illustrates how consumer behavior shapes equilibrium take-back incentives.
The left panel plots the equilibrium bonus $b^*$ as a function of consumer responsiveness~$\beta$. 
A sharp threshold is observed at $\beta(v-k)=2\gamma_r$, below which the equilibrium degenerates to $b^*=0$. 
Beyond this threshold, bonuses increase linearly in~$\beta$, exactly as characterized in Theorem~\ref{thm:symmetric}(b).
The right panel shows the equilibrium bonus surface $b^*(\beta,\gamma_r)$ over a two-dimensional parameter grid. 
The boundary $\beta(v-k)=2\gamma_r$ clearly separates inactive and active take-back regimes. 
Within the active region, bonuses increase with responsiveness and decrease with inertia, confirming that higher behavioral frictions weaken take-back incentives.

\begin{figure}[h]
\centering
\begin{subfigure}[b]{0.48\textwidth}
    \centering
    \adjustbox{height=5cm}{%
        \includegraphics{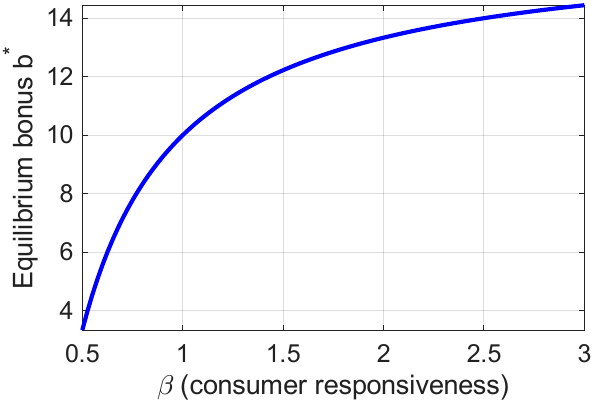}
    }
\end{subfigure}
\hfill
\begin{subfigure}[b]{0.48\textwidth}
    \centering
    \adjustbox{height=5cm}{%
        \includegraphics{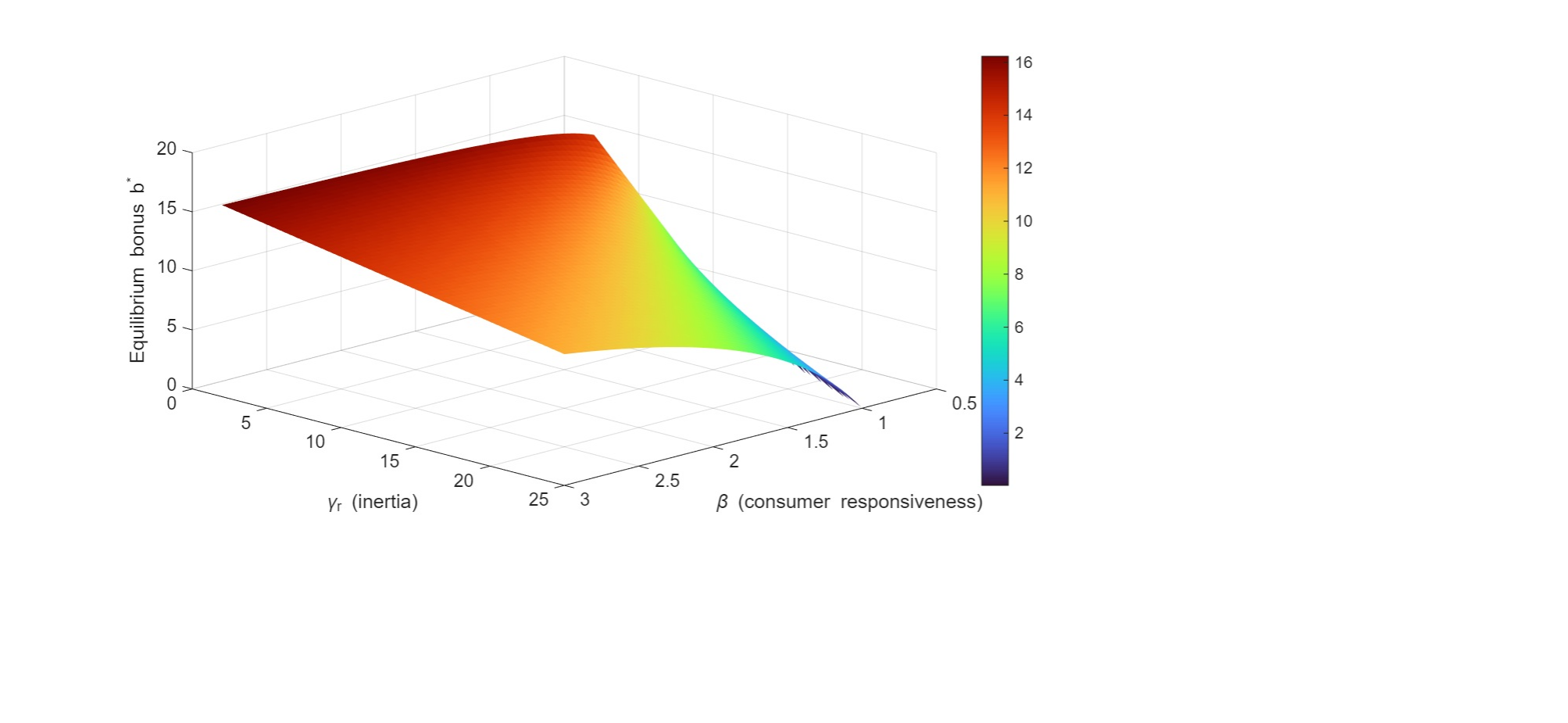}
    }
\end{subfigure}
\caption{Equilibrium take-back incentives.
\emph{Left}: Bonus $b^*$ as a function of consumer responsiveness $\beta$.
\emph{Right}: Bonus surface $b^*(\beta,\gamma_r)$ under inertia--responsiveness allocation.}
\label{fig:panel1}
\end{figure}

\subsection{Cross-market coupling and return outcomes}

Figure~\ref{fig:panel2} illustrates the interaction between forward pricing and reverse-channel outcomes.

The left panel shows the equilibrium wholesale price $w^*$ under proportional allocation as a function of remanufacturing value~$v$. 
The relationship is strictly decreasing and approximately linear, with slope $-\theta/2$, validating the cross-market coupling result in Theorem~\ref{thm:cross_market_proportional}. 
As remanufacturing becomes more profitable, manufacturers optimally reduce wholesale prices to expand sales and capture additional return volume.

The right panel plots the total return volume $Q_r(\beta,\gamma_r)$ under the inertia--responsiveness mechanism. 
Return quantities increase with consumer responsiveness and decrease with inertia, demonstrating that behavioral parameters play a central role in determining circular efficiency. 
High responsiveness and low inertia jointly deliver the largest recovery benefits.

\begin{figure}[h]
\centering
\begin{subfigure}[b]{0.48\textwidth}
    \centering
    \adjustbox{height=5cm}{%
        \includegraphics{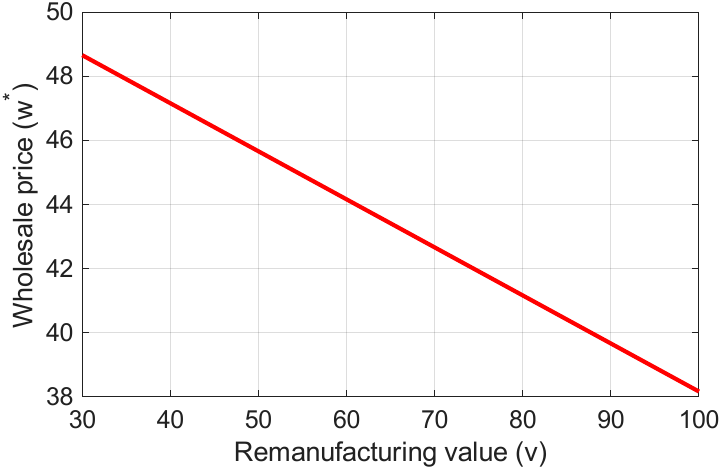}
    }
\end{subfigure}
\hfill
\begin{subfigure}[b]{0.48\textwidth}
    \centering
    \adjustbox{height=5cm}{%
        \includegraphics{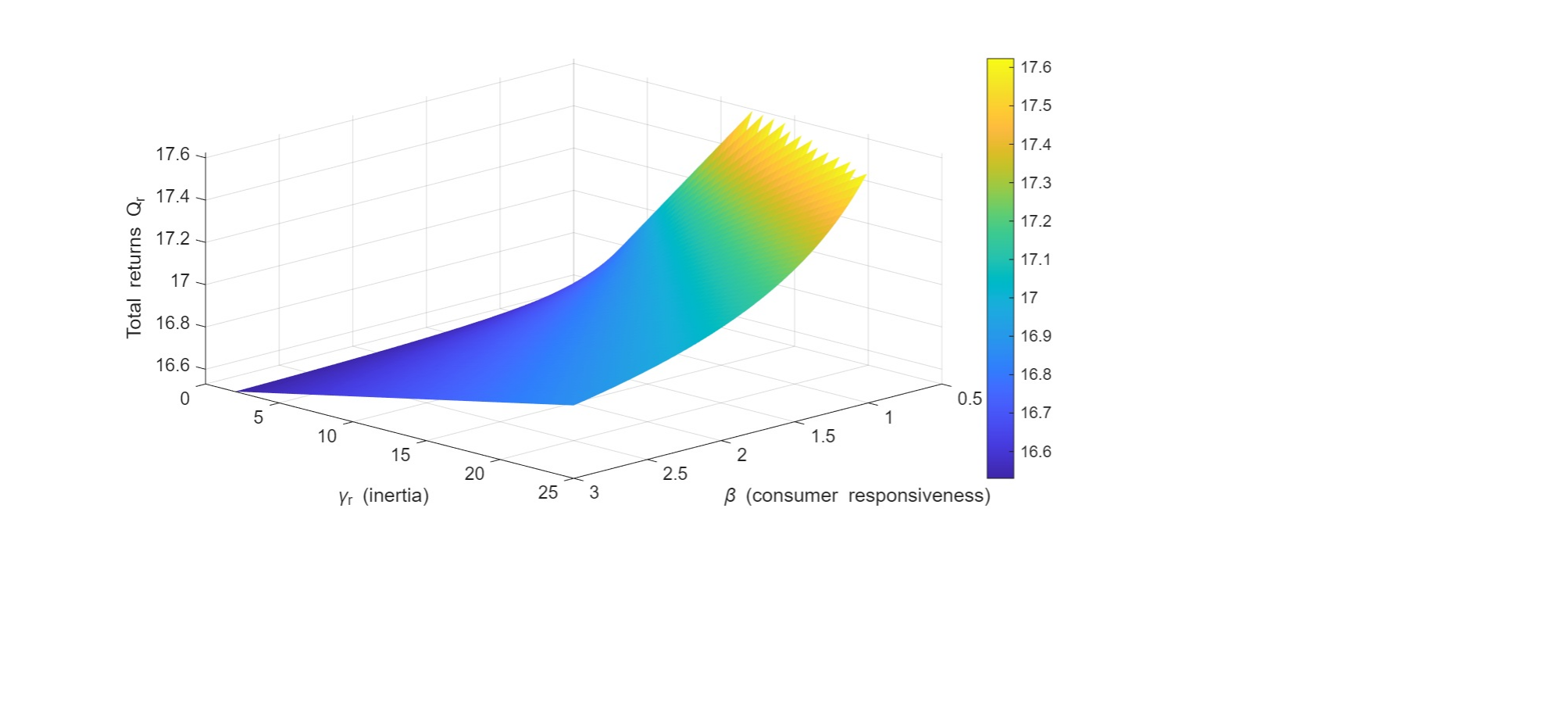}
    }
\end{subfigure}
\caption{Forward--reverse channel interactions.
\emph{Left}: Wholesale price $w^*$ as a function of remanufacturing value $v$.
\emph{Right}: Total return volume $Q_r(\beta,\gamma_r)$ under inertia--responsiveness allocation.}
\label{fig:panel2}
\end{figure}

\subsection{Effect of market competition (product substituitability)}\label{subsec:epsilon}

We now examine how the degree of market competition, captured by the cross-price
substituitability parameter $\varepsilon$, affects equilibrium outcomes.
Recall that $\varepsilon$ measures how easily consumers substitute between the two retailers:
low values correspond to weak competition and high brand loyalty, whereas values close to one represent strong substituitability and low loyality of customers towards retailers.

Using the closed-form symmetric equilibrium expressions derived in
Theorems~\ref{thm:symmetric} and~\ref{thm:cross_market_proportional},
we vary $\varepsilon$ from [0,1] while keeping all other parameters fixed
at their baseline values.

\textit{Wholesale and retail prices:}
Figure~\ref{fig:epsilon_prices} illustrates the effect of the substituitability parameter
$\varepsilon$ on equilibrium wholesale and retail prices.
As $\varepsilon$ increases, the products offered by the two retailers become more substitutable to consumers. In this regime, a price increase at one retailer primarily diverts demand toward the competing retailer rather than inducing a large loss of customers for that retailer. As a result, downstream price competition becomes less aggressive, allowing both retailers and manufacturers to sustain higher equilibrium prices despite intensified strategic interaction.
This explains the monotonic increase in both $w^*$ and $p^*$ as $\varepsilon$ rises.

\textit{Return volumes and circular performance:}
Figure~\ref{fig:epsilon_returns} illustrates how the degree of product substituitability $\varepsilon$
affects total return volumes $Q_r$.
Although $\varepsilon$ does not enter the return-allocation mechanism directly,
it influences return outcomes indirectly through its effect on equilibrium demand.
As $\varepsilon$ increases, stronger substitution weakens the disciplining effect of price competition, leading to higher equilibrium wholesale and retail prices. Depending on parameter values, these pricing adjustments can lead to higher realized return quantities even though aggregate demand continues to decrease with price.
\begin{figure}[h]
\centering
\begin{minipage}[t]{0.48\textwidth}
    \centering
    \includegraphics[height=4.5cm,keepaspectratio]{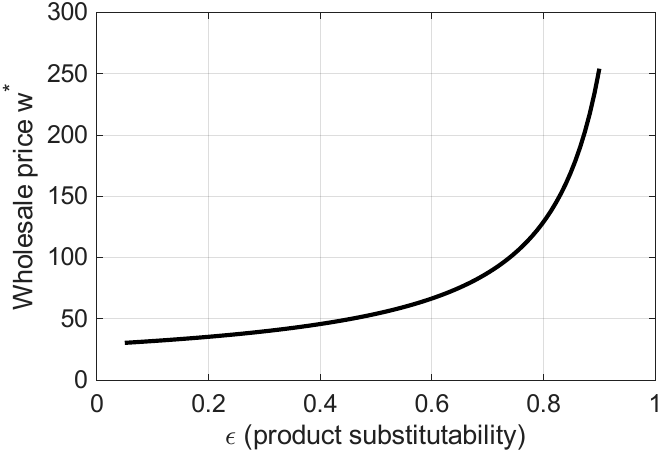}
\end{minipage}
\hfill
\begin{minipage}[t]{0.48\textwidth}
    \centering
    \includegraphics[height=4.5cm,keepaspectratio]{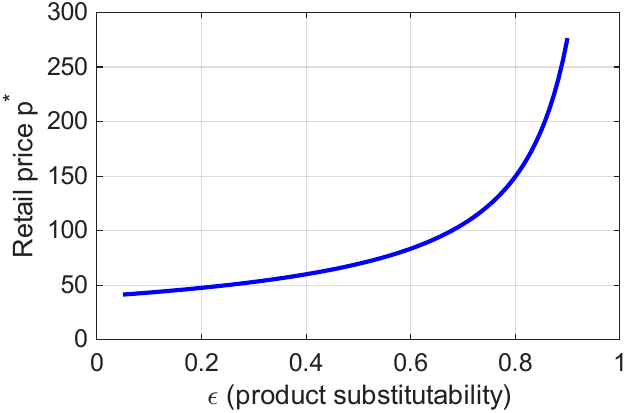}
\end{minipage}
\caption{Effect of product substitutability $\varepsilon$ on equilibrium prices.
\emph{Left}: Wholesale price $w^*$.
\emph{Right}: Retail price $p^*$.}
\label{fig:epsilon_prices}
\end{figure}

\begin{figure}[h]
\centering
\begin{minipage}{0.48\textwidth}
    \centering
    \includegraphics[height=4.5cm,keepaspectratio]{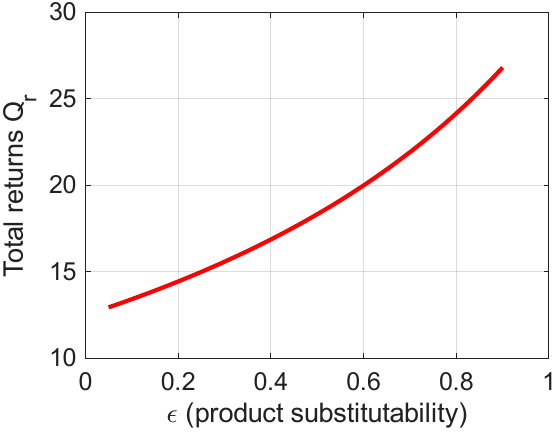}
\end{minipage}
\hfill
\begin{minipage}{0.48\textwidth}
    \centering
    \includegraphics[height=4.5cm,keepaspectratio]{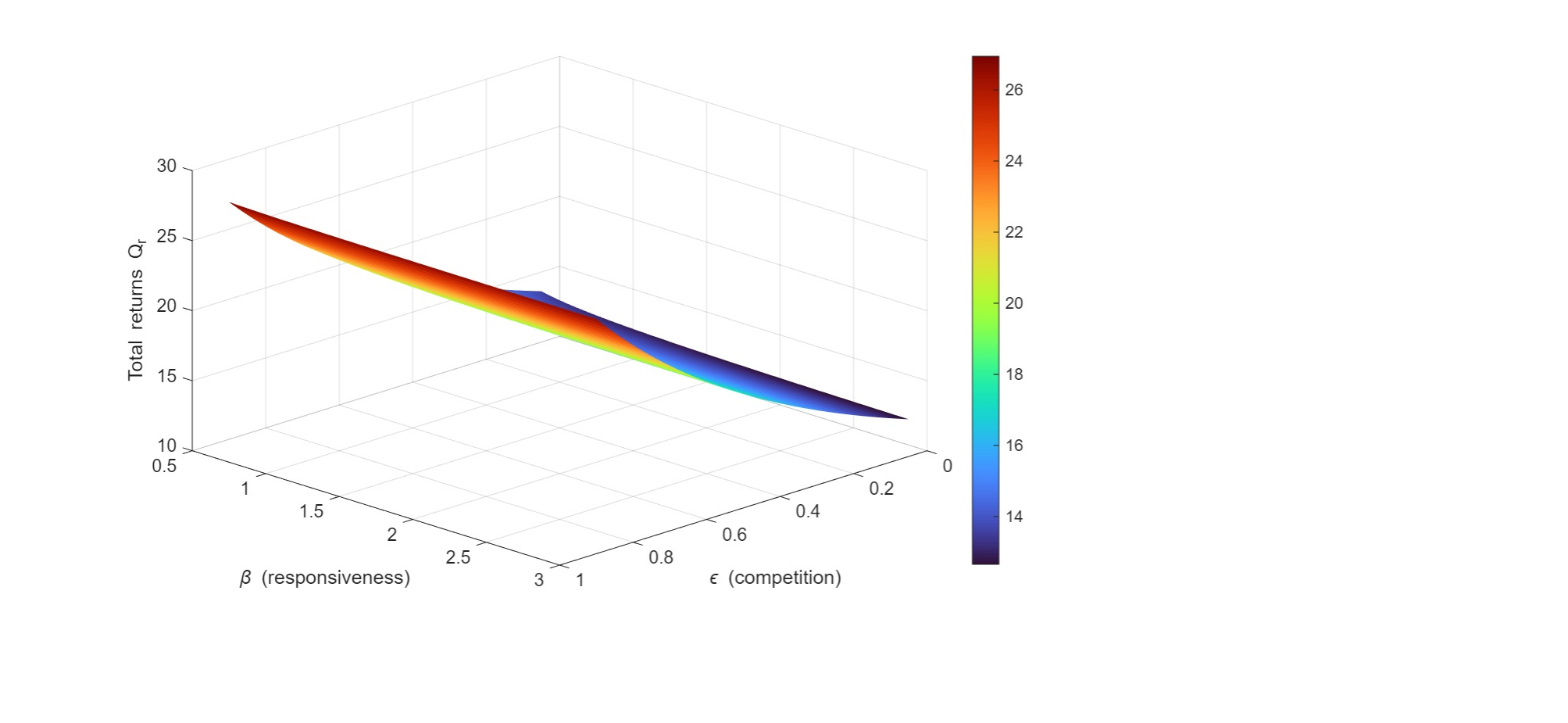}
\end{minipage}
\caption{Effect of competition on return outcomes.
\emph{Left}: Total return volume $Q_r$ as a function of $\varepsilon$.
\emph{Right}: Joint effect of product substitutability $\varepsilon$ and consumer responsiveness $\beta$ on total returns.}
\label{fig:epsilon_returns}
\end{figure}

\subsection{Discussion}

Taken together, the numerical experiments reflect the interaction between competitive pricing,
consumer behavior, and take-back incentives in a closed-loop supply chain.
The results show that consumer responsiveness and inertia play a central role in shaping equilibrium outcomes:
higher responsiveness amplifies take-back incentives and return volumes,
whereas greater inertia weakens manufacturers’ ability to stimulate returns through monetary bonuses.
At the same time, improvements in remanufacturing value feed back into forward-channel decisions,
leading manufacturers to strategically reduce wholesale prices in order to expand sales
and capture additional return flows.

The numerical analysis further reveals the role of market structure.
As the degree of product substitutability $\varepsilon$ increases, customers become less loyal to individual retailers and competition becomes increasingly redistributive across firms. This alters pricing incentives and equilibrium outcomes, even though total market demand may still respond to price changes.
In this regime, stronger substituitability alters pricing incentives by weakening competitive discipline, allowing firms to sustain higher equilibrium prices. While aggregate demand remains price-sensitive, these pricing effects can nonetheless translate into higher return quantities under certain parameter configurations.
Thus, stronger substitutability can enhance circular performance through increased return flows,
despite intensifying strategic interaction among firms.

\subsection{Managerial Implications}

The numerical analysis provides several insights that are directly relevant for managers/firms 
seeking to improve the performance of competitive CLSCs.

\begin{enumerate}[label=(\roman*),noitemsep]
\item \textbf{Incentive design matters:}
When consumer returns are allocated proportionally and do not respond to take-back bonuses,
manufacturers optimally free-ride and provide no incentives for product returns.
Thus take-back programs are unlikely to emerge voluntarily
unless consumer responsiveness is explicitly incorporated into program design.

\item \textbf{Behavioral frictions shape feasibility:}
Introducing consumer inertia and responsiveness generates a clear, quantifiable threshold
that determines when active take-back incentives become viable.
Higher consumer responsiveness strengthens the effectiveness of bonuses,
while greater inertia dampens their impact.
Managers must therefore account for behavioral heterogeneity when calibrating incentive levels,
as ignoring inertia can lead to overestimating the effectiveness of monetary rewards.

\item \textbf{Reverse profitability affects forward pricing:}
Increases in remanufacturing profitability through higher recovery value or return rates
lead manufacturers to strategically reduce wholesale prices.
By lowering prices upstream, firms expand sales volumes in order to capture additional returns,
demonstrating a strong interdependence between forward-channel pricing
and reverse-channel profitability.

\item \textbf{Market structure influences circular performance:}
Higher product substitutability weakens competitive pricing pressure and can lead to higher equilibrium prices. Through its indirect effect on pricing and sales incentives, this may result in larger return volumes, provided that appropriate take-back incentives are in place.
\end{enumerate}

Overall, these findings emphasize the need for joint consideration of pricing strategies, consumer behavior, market structure,
and remanufacturing incentives.

\subsection{Asymmetric Manufacturers: An Extension}

The closed-form equilibrium results derived earlier rely on parameter symmetry in order to obtain explicit joint solutions. We now extend the analysis to the asymmetric manufacturer case. Although a closed-form joint equilibrium is hard to compute, each manufacturer’s take-back bonus best response admits a complete analytical characterization with guaranteed uniqueness.

\textit{Reverse-channel objective:}
For manufacturer $i$, from \eqref{eq:manprofit} reverse-channel profit is
\[
\pi^{rev}_{M_i}
=
(v_i - b_i - k_i)\,
\theta D^{tot}(w)\,
s_i(b_1,b_2),
\]
where $D^{tot}(w)=D_1^*(w)+D_2^*(w)$ and the inertia--responsiveness allocation rule given in \eqref{eq:allocation}. Demand \eqref{eq:demand} does not depend on bonuses, so $D^{tot}(w)$ is constant with respect to $b_i$. Hence forward-channel profit terms are constant in $b_i$ and do not affect the best-response calculation for bonuses.

\textit{First-order condition:}
Differentiating with respect to $b_i$ gives
\[
\frac{\partial \pi^{rev}_{M_i}}{\partial b_i}
=
\theta D^{tot}(w)
\left[
- s_i
+
(v_i - b_i - k_i)
\frac{\partial s_i}{\partial b_i}
\right].
\]
An interior optimum therefore satisfies
\begin{equation}\label{eqn_foc_opt}
- s_i
+
(v_i - b_i - k_i)
\frac{\partial s_i}{\partial b_i}
= 0.
\end{equation}

Let
\[
B = \beta_1 b_1 + \beta_2 b_2 + 2\gamma_r,
\qquad
C_i = \beta_j b_j + 2\gamma_r, \; j \neq i.
\]

Direct differentiation yields
\[
\frac{\partial s_i}{\partial b_i}
=
\frac{\beta_i C_i}{B^2}
> 0,
\qquad
\frac{\partial^2 s_i}{\partial b_i^2}
=
-\frac{2\beta_i^2 C_i}{B^3}
< 0.
\]

Substituting into \eqref{eqn_foc_opt} gives
\[
\frac{\beta_i b_i + \gamma_r}{B}
=
(v_i - b_i - k_i)
\frac{\beta_i C_i}{B^2}.
\]

Multiplying by $B^2$ yields
\[
(\beta_i b_i + \gamma_r)(\beta_i b_i + C_i)
=
(v_i - b_i - k_i)\beta_i C_i.
\]

\textit{Quadratic best-response equation:}
Expanding and collecting terms produces a quadratic equation in $b_i$:
\begin{equation}\label{eqn_quadratic}
\beta_i^2 b_i^2
+
\beta_i(2C_i+\gamma_r)b_i
+
\big[\gamma_r C_i - \beta_i C_i (v_i-k_i)\big]
=
0.
\end{equation}

Thus, each manufacturer’s bonus best response is characterized by the real roots of this quadratic equation. Since the feasible set is $b_i \ge 0$, the economically admissible best response is the nonnegative root of \eqref{eqn_quadratic}, when it exists.

\textit{Uniqueness of the interior solution:}
Although the best-response equation is quadratic, the optimal solution is unique because the reverse-channel objective is strictly concave in $b_i$ over the convex feasible set $b_i \ge 0$. Specifically,
\[
\frac{\partial^2 \pi^{rev}_{M_i}}{\partial b_i^2}
=
\theta D^{tot}(w)
\left[
-2 \frac{\partial s_i}{\partial b_i}
+
(v_i - b_i - k_i)
\frac{\partial^2 s_i}{\partial b_i^2}
\right].
\]
Since $\partial s_i/\partial b_i > 0$, $\partial^2 s_i/\partial b_i^2 < 0$, and $v_i-k_i>0$ under reverse-channel viability, both terms are strictly negative, implying
\[
\frac{\partial^2 \pi^{rev}_{M_i}}{\partial b_i^2} < 0.
\]
Hence the objective is strictly concave and admits at most one maximizer. Therefore, even though \eqref{eqn_quadratic} has two algebraic roots, at most one satisfies the feasibility constraint and first-order condition, yielding a unique global maximizer. Moreover, because $C_i = \beta_j b_j + 2\gamma_r$, the best-response bonus is single-valued and continuous in the rival’s bonus decision.

\textit{Firm-specific incentive threshold:}
Evaluating the first-order derivative at $b_i=0$ yields
\[
\left.\frac{\partial \pi^{rev}_{M_i}}{\partial b_i}\right|_{b_i=0}
\propto
\beta_i (v_i-k_i) - C_i.
\]
Thus, a strictly positive take-back bonus is optimal if and only if
\[
\beta_i (v_i-k_i) > \beta_j b_j + 2\gamma_r,
\]
which generalizes the symmetric feasibility threshold derived earlier. If this condition does not hold, the derivative at zero is nonpositive and, since the objective is strictly concave in $b_i$, the constrained optimum occurs at the boundary $b_i = 0$. Incentive provision therefore depends on firm-specific remanufacturing margins and behavioral responsiveness.
Incentive provision therefore depends on firm-specific remanufacturing margins and behavioral responsiveness.

\textit{Comparative statics:}
Let $F(b_i;v_i,k_i)=0$ denote the quadratic best-response equation in \eqref{eqn_quadratic}. Since
\[
\frac{\partial F}{\partial b_i}
=
2\beta_i^2 b_i + \beta_i(2C_i+\gamma_r)
> 0,
\]
the implicit function theorem applies. Differentiating \eqref{eqn_quadratic} gives
\[
F_{v_i} = -\beta_i C_i < 0,
\qquad
F_{k_i} = \beta_i C_i > 0.
\]
Hence,
\[
\frac{\partial b_i^{BR}}{\partial v_i}
=
-\frac{F_{v_i}}{F_{b_i}}
> 0,
\qquad
\frac{\partial b_i^{BR}}{\partial k_i}
=
-\frac{F_{k_i}}{F_{b_i}}
< 0.
\]

Therefore, higher remanufacturing value strengthens take-back incentives, while higher processing cost weakens them. Within the interior-incentive regime, greater consumer responsiveness $\beta_i$ increases the marginal return-share gain from incentives and raises the optimal bonus level. Because each manufacturer’s best-response bonus is unique and continuous in the rival’s bonus, a Nash equilibrium in take-back bonuses exists by standard fixed-point arguments.

\section{Conclusion}\label{sec:conclusion}

In this paper, we developed a  two-echelon Stackelberg–Nash framework to study competition in a dual-channel CLSC with competing manufacturer–retailer pairs. By jointly modeling forward pricing decisions and reverse-channel take-back incentives within a unified analytical setting, we derived closed-form retailer pricing responses, manufacturer optimality conditions, and symmetric equilibrium outcomes under alternative return-allocation mechanisms.
We obtained several interesting insights. First, when returned products are allocated proportionally and consumer behavior does not respond to incentives, manufacturers optimally free-ride on each other’s collection efforts, leading to a collapse of take-back incentives. Second, introducing consumer inertia and responsiveness fundamentally alters this outcome by generating a clear and intuitive threshold for the emergence of positive take-back bonuses. Thus behavioral frictions play critical role in sustaining market-based return programs. Third, improvements in remanufacturing profitability or return rates feed back into the forward channel, inducing manufacturers to lower wholesale prices in order to expand sales and capture additional returns. This forward–reverse coupling demonstrates that pricing and recovery decisions cannot be analyzed in isolation in competitive closed-loop systems.
Numerical experiments validate the analytical findings and illustrate how consumer behavior, market competition, and remanufacturing value jointly shape equilibrium prices, incentives, and return volumes. Together, the results provide a coherent explanation of when voluntary take-back programs are viable and how reverse-channel profitability reshapes competitive pricing strategies.
Several  avenues for future research emerge from this work. Extending the framework to dynamic or multi-period settings would allow the study of learning, reputation and incentive design. Incorporating uncertainty in return quantity or quality would further make the model more realistic. Relaxing symmetry assumptions to allow for heterogeneous firms, products, or consumer segments would broaden applicability to more complex competitive environments.
\bibliographystyle{splncs04}   
\bibliography{ref}
\appendix

\section{Proof of Theorem~\ref{thm:retailer} (Retailer-Stage Uniqueness)}
\label{app:retailer_proof}

Fix wholesale prices $(w_1,w_2)$ chosen by the manufacturers in Stage~1.  
In Stage~2, retailers simultaneously choose retail prices $(p_1,p_2)$ to maximize their respective profits
\[
\pi_{R_i}(p_i,p_j;w_i)=(p_i-w_i)D_i(p_i,p_j)-O_{R_i}, \qquad i\neq j.
\]

Using the demand functions defined in \eqref{eq:demand}, the first-order condition for retailer~$i$ is
\[
\frac{\partial \pi_{R_i}}{\partial p_i}
=\bar d_i+\alpha_i w_i+\varepsilon\alpha_j p_j-2\alpha_i p_i=0,
\]
which yields the system of equations reported in (R--FOC):
\[
\bar d_1+\alpha_1 w_1+\varepsilon\alpha_2 p_2-2\alpha_1 p_1=0,\qquad
\bar d_2+\alpha_2 w_2+\varepsilon\alpha_1 p_1-2\alpha_2 p_2=0.
\]

This linear system can be written compactly as
\[
Mp=h,
\]
where
\[
M=
\begin{pmatrix}
2\alpha_1 & -\varepsilon\alpha_2\\
-\varepsilon\alpha_1 & 2\alpha_2
\end{pmatrix},
\qquad
p=\begin{pmatrix}p_1\\p_2\end{pmatrix},
\qquad
h=\begin{pmatrix}\bar d_1+\alpha_1 w_1\\ \bar d_2+\alpha_2 w_2\end{pmatrix}.
\]

The determinant of $M$ is
\[
\det(M)=\alpha_1\alpha_2(4-\varepsilon^2)>0
\quad \text{for } \varepsilon\in[0,1],
\]
which guarantees that $M$ is nonsingular. Hence the system admits a unique solution
$p=M^{-1}h$, yielding the equilibrium retail prices given in
\eqref{eq:p1star_main}--\eqref{eq:p2star_main}.

Finally, note that
\[
\frac{\partial^2 \pi_{R_i}}{\partial p_i^2}=-2\alpha_i<0,
\]
so each retailer’s profit function is strictly concave in its own price.  
Therefore, the solution obtained above is the unique Nash equilibrium of the retailer pricing subgame.  
\hfill $\square$

\section{Substituted Demands and Derivatives}
\label{app:sub}

Substituting the equilibrium retail prices $p_i^*(w_1,w_2)$ from
\eqref{eq:p1star_main}--\eqref{eq:p2star_main} into the demand functions
\eqref{eq:demand}, and defining $\Delta=4-\varepsilon^2$, we obtain the reduced-form demands faced by manufacturers:
\[
D_1^*(w_1,w_2)
=\frac{1}{\Delta}
\bigl(2\bar d_1+\varepsilon\bar d_2
-\alpha_1(2-\varepsilon^2)w_1+\varepsilon\alpha_2 w_2\bigr),
\]
\[
D_2^*(w_1,w_2)
=\frac{1}{\Delta}
\bigl(\varepsilon\bar d_1+2\bar d_2
+\varepsilon\alpha_1 w_1-\alpha_2(2-\varepsilon^2)w_2\bigr).
\]

Differentiating with respect to wholesale prices yields constant marginal effects:
\[
\frac{\partial D_1^*}{\partial w_1}
=-\frac{\alpha_1(2-\varepsilon^2)}{\Delta}<0,
\qquad
\frac{\partial D_1^*}{\partial w_2}
=\frac{\varepsilon\alpha_2}{\Delta}>0,
\]
with analogous expressions for $D_2^*$.  
These derivatives are used directly in the manufacturers’ first-order conditions \eqref{eqn_1}.  
\hfill $\square$

\section{Proof of Theorem~\ref{thm:symmetric} (Symmetric Equilibrium)}
\label{app:symmetric_proof}

Assume symmetry:
$\bar d_i=\bar d$, $\alpha_i=\alpha$, $c_i=c$, $v_i=v$, $k_i=k$, and $\beta_i=\beta$.
Under symmetry, Theorem~\ref{thm:retailer} implies
\[
p=\frac{\bar d/\alpha+w}{2-\varepsilon},
\qquad
D(w)=\frac{\bar d-\alpha(1-\varepsilon)w}{2-\varepsilon}.
\]

\paragraph{(a) Proportional allocation.}
Under proportional allocation, each manufacturer receives a fixed share
$s_i=\tfrac12$ (see \eqref{eq:allocation}) of total returns. Manufacturer profit (\eqref{eq:manprofit}) reduces to 
\[
\pi_M(w,b)
=(w-c)D(w)+\theta(v-b-k)D(w)-O_M.
\]

Differentiating the above w.r.t $b$  gives
\[
\frac{\partial \pi_M}{\partial b}=-\theta D(w)<0,
\]
implying that the optimal bonus is $b^*=0$ (as we have $D(w) > 0$ under operating conditions by \textbf{A.2}).

Substituting $b=0$, the manufacturer’s problem reduces to maximizing
\begin{eqnarray}\label{eqn_manu_w}
 \pi_M(w)=[(w-c)+\theta(v-k)]D(w).   
\end{eqnarray}

Using $D'(w)=-\alpha(1-\varepsilon)/(2-\varepsilon)$ , the first-order condition
$\partial \pi_M/\partial w=0$ yields
\[
w^*=\frac{\bar d}{2\alpha(1-\varepsilon)}+\frac{c-\theta(v-k)}{2},
\]
as stated in the theorem.

\paragraph{(b) Inertia--responsiveness allocation.}
Under the inertia--responsiveness allocation rule~\eqref{eq:allocation},
\[
s_i(b_1,b_2)=\frac{\beta b_i+\gamma_r}{\beta b_1+\beta b_2+2\gamma_r},
\qquad i\in\{1,2\}.
\]
In a symmetric equilibrium, manufacturers choose identical bonuses
$b_1=b_2=b$, which implies
\[
s_i=\frac{\beta b+\gamma_r}{2(\beta b+\gamma_r)}=\frac{1}{2}.
\]

Differentiating $s_i$ with respect to $b_i$ gives
\[
\frac{\partial s_i}{\partial b_i}
=\frac{\beta(\beta b_j+2\gamma_r)}
{(\beta b_i+\beta b_j+2\gamma_r)^2},
\qquad j\neq i.
\]
Evaluating this derivative under symmetry $(b_i=b_j=b)$ yields
\[
\left.\frac{\partial s_i}{\partial b_i}\right|_{\mathrm{sym}}
=\frac{\beta(\beta b+2\gamma_r)}{4(\beta b+\gamma_r)^2}.
\]

The manufacturer’s first-order condition with respect to the take-back
bonus $b_i$ (see~\eqref{eqn_2}) is
\[
-s_i+(v-b_i-k)\frac{\partial s_i}{\partial b_i}=0.
\]
Substituting $s_i=\tfrac12$ in the expression above gives
\[
-\frac12
+(v-b-k)\frac{\beta(\beta b+2\gamma_r)}
{4(\beta b+\gamma_r)^2}=0.
\]
Multiplying both sides by $4(\beta b+\gamma_r)^2$ and simplifying yields
\[
2(\beta b+\gamma_r)^2
=\beta(v-b-k)(\beta b+2\gamma_r),
\]
which reduces to
\[
3\beta b=\beta(v-k)-2\gamma_r.
\]
Hence, the  stationary point is
\[
b^*=\frac{v-k}{3}-\frac{2\gamma_r}{3\beta}.
\]

The equilibrium bonus satisfies $b^*>0$ if and only if
$\beta(v-k)>2\gamma_r$.
If this condition fails, the objective function is strictly decreasing
at $b=0$, and the constrained optimum is $b^*=0$.

Keeping wholesale prices fixed, the manufacturer’s profit as a function
of $b_i$ can be written as (see \eqref{eq:manprofit})
\[
\pi_M(b_i)=\theta D^*(w)\,(v-b_i-k)s_i(b_i,b_j)
+\text{terms independent of }b_i.
\]
Differentiating twice,
\[
\frac{\partial^2 \pi_M}{\partial b_i^2}
=\theta D^*(w)
\left[
-2\frac{\partial s_i}{\partial b_i}
+(v-b_i-k)\frac{\partial^2 s_i}{\partial b_i^2}
\right].
\]
Under inertia--responsiveness allocation (see \eqref{eq:allocation}),
\[
\frac{\partial^2 s_i}{\partial b_i^2}
=-\frac{2\beta^2(\beta b_j+2\gamma_r)}
{(\beta b_i+\beta b_j+2\gamma_r)^3}<0.
\]
Since $D^*(w)>0$ by Assumption~\textbf{A.2}, it follows that
\[
\left.\frac{\partial^2 \pi_M}{\partial b_i^2}\right|_{\mathrm{sym}}<0,
\]
so $b^*$ is a strict global maximizer of the manufacturer’s bonus-choice
problem.
Substituting $b^*$ into the manufacturer’s wholesale-price
first-order condition \eqref{eqn_manu_w} yields the equilibrium wholesale price $w^*$
 in Theorem~\ref{thm:symmetric}.  
\qed

\section{Proof of Theorem \ref{thm:cross_market_proportional} (Forward--reverse coupling under proportional allocation)}\label{app:cross_market_proof}
 From Theorem \ref{thm:symmetric}(a), the symmetric equilibrium wholesale price under proportional allocation is given by
\[
w^* = \frac{\bar d}{2\alpha(1-\varepsilon)} + \frac{c-\theta(v-k)}{2}.
\]
Differentiating $w^*$ with respect to $v$ while holding all other parameters fixed yields
\[
\frac{\partial w^*}{\partial v} = -\frac{\theta}{2}.
\]
Similarly, differentiating with respect to $\theta$ yields
\[
\frac{\partial w^*}{\partial \theta} = -\frac{v-k}{2}.
\]
Since $\theta>0$ and $v-k>0$ by assumption, both derivatives are strictly negative, establishing the result.
\qed

\section{Proof of Corollary \ref{cor:coupling_inertia} (Forward--reverse coupling under inertia--responsiveness)}\label{app_coupling}

From Theorem~\ref{thm:symmetric}(b), the symmetric equilibrium bonus is
\[
b^*=\frac{v-k}{3}-\frac{2\gamma_r}{3\beta},
\]
and the equilibrium wholesale price is
\[
w^*=\frac{\bar d}{2\alpha(1-\varepsilon)}+\frac{c-\theta(v-b^*-k)}{2}.
\]
Substituting $b^*$ into the expression for $w^*$ and differentiating with respect to $v$ while holding all other parameters fixed yields
\[
\frac{\partial w^*}{\partial v}=-\frac{\theta}{3}.
\]
Differentiating with respect to $\theta$ yields
\[
\frac{\partial w^*}{\partial \theta}=-\frac{v-b^*-k}{2}.
\]
Under the interior condition $\beta(v-k)>2\gamma_r$, we have $b^*>0$ and $v-b^*-k>0$, implying both derivatives are strictly negative.
\qed

\section{Proof of Theorem~\ref{thm:welfare} (Welfare Inefficiency)}
\label{app:welfare_proof}

Since $b \ge 0$ and $k > 0$, it follows that
\[
\theta v > \theta (v - b - k)
\]
whenever $\theta v > 0$. Let
\[
F(w) = D(w) + \bigl(w - c + \theta (v - b - k)\bigr) D'(w)
\]
denote the first-order condition under decentralized decision-making, and
\[
G(w) = D(w) + \bigl(w - c + \theta v\bigr) D'(w)
\]
denote the corresponding first-order condition under the social planner’s problem.

From \eqref{eqn_sym_pr_dem}, the demand function $D(w)$ is linear and strictly
decreasing in $w$, with
\[
D'(w) = -\frac{\alpha(1-\varepsilon)}{2-\varepsilon} < 0
\quad \text{and} \quad
D''(w)=0 .
\]
Therefore,
\[
F'(w) = D'(w) + D'(w) + \bigl(w-c+\theta(v-b-k)\bigr)D''(w)
      = 2D'(w) < 0,
\]
and similarly,
\[
G'(w) = D'(w) + D'(w) + \bigl(w-c+\theta v\bigr)D''(w)
      = 2D'(w) < 0.
\]
Hence, both $F(w)$ and $G(w)$ are strictly decreasing functions of $w$, and each
first-order condition admits a unique solution.

Because $\theta v > \theta (v - b - k)$ and $D'(w)<0$, it follows that
\[
G(w) < F(w) \qquad \text{for all } w .
\]
Since both functions are strictly decreasing, the equation $G(w)=0$ is satisfied
at a strictly lower wholesale price than the equation $F(w)=0$. That is, the equilibrium wholesale price manufacturer under the decentralized setting $w^{*}_{D}$ and under social planner $w^{*}_{SP}$ satisfy,
\[
w^{*}_{SP} < w^{*}_{D}.
\]

Consequently, the socially optimal wholesale price under social planner is strictly lower than the
decentralized equilibrium wholesale price. Since $D(w)$ is strictly decreasing in $w$ from \eqref{eqn_sym_pr_dem} ,
this implies strictly higher equilibrium demand and, hence, strictly higher
return volumes under the social optimum.
\hfill $\square$


\end{document}